\newtheorem{theorem}{Theorem}
\newtheorem{lemma}[theorem]{Lemma}
\theoremstyle{definition}
\newtheorem{remark}{Remark}
\algnewcommand{\LineComment}[1]{\State \(\triangleright\) #1}
\newcommand*\diamonded[1]{\tikz[baseline=(char.base)]{
            \node[shape=diamond,draw,inner sep=1pt] (char) {$#1$};}}
\newcommand*\diam{\tikz[baseline=(char.base)]{
    \node[shape=diamond,draw,inner sep=1pt] (char) {\phantom{$i$}};}}
\renewcommand*\env@matrix[1][*\c@MaxMatrixCols c]{%
  \hskip -\arraycolsep
  \let\@ifnextchar\new@ifnextchar
  \array{#1}}
\begin{document}
\title{Efficient Code for Relativistic Quantum Summoning}
\author{Ya-Dong Wu}
\affiliation{Institute for Quantum Science and Technology, University of Calgary, Alberta T2N 1N4, Canada}
\author{Abdullah Khalid}
\affiliation{Institute for Quantum Science and Technology, University of Calgary, Alberta T2N 1N4, Canada}
\author{Barry C.\ Sanders}
\affiliation{Institute for Quantum Science and Technology, University of Calgary, Alberta T2N 1N4, Canada}
\affiliation{Institute for Quantum Information and Matter, California Institute of Technology, Pasadena, California 91125, USA}
\affiliation{Program in Quantum Information Science, Canadian Institute for Advanced Research,Toronto, Ontario M5G 1Z8, Canada}
\affiliation{Shanghai Branch, National Laboratory for Physical Sciences at Microscale, University of Science and Technology of China, Shanghai 201315, China}
\begin{abstract}
Summoning retrieves quantum information,
prepared somewhere in spacetime,
at another specified point in spacetime,
but this task is limited by the quantum no-cloning principle and the speed-of-light bound.
We develop a thorough mathematical framework for summoning quantum information in a relativistic system and
formulate a quantum summoning protocol for any valid configuration of causal diamonds in spacetime.
For single-qubit summoning,
we present a protocol based on a Calderbank-Shor-Steane code
that decreases the space complexity for encoding by a factor of two compared to the previous best result and reduces the gate complexity from scaling as the cube to the square of the number of causal diamonds.
Our protocol includes decoding whose gate complexity scales linearly with the number of causal diamonds.
Our thorough framework for quantum summoning enables full specification of the protocol,
including spatial and temporal implementation and costs,
which enables quantum summoning to be a well posed protocol for relativistic quantum communication purposes.
\end{abstract}

\maketitle

\section{Introduction}

Quantum summoning is the task of encoding and transmitting quantum information to a configuration of spacetime causal diamonds such that the quantum information can be reconstructed in any one of these causal diamonds~\cite{Kent2013,JPA49,NJP18,PhysRevA.93.062327}. Quantum summoning cannot be guaranteed to work for every configuration of causal diamonds because quantum information cannot be copied~\cite{Park1970,wootters1982single,dieks1982,Juan} or transmitted superluminally~\cite{wootters1982single}. Summoning is only possible for a configuration of causal diamonds if every pair of diamonds is causally related, where two diamonds are causally related if the earliest point of one can communicate with the latest point of the other~\cite{JPA49}.
Our aim is to construct efficient protocols for summoning quantum information in any configuration of~$N$ pairwise-related causal diamonds.

A variety of work has been done on summoning ever since Kent introduced this task and presented a no-summoning theorem~\cite{Kent2013}. Hayden and May~\cite{JPA49} showed that quantum summoning can be reduced to the primitives of quantum secret sharing~\cite{PhysRevLett.83.648,PhysRevA.61.042311,PhysRevA.78.042309} and teleportation~\cite{PhysRevLett.70.1895,0305-4470-34-35-332}. They exploited a codeword-stabilized (CWS) quantum code~\cite{4729763} to design a summoning protocol that is efficient in the sense that the number of qubits $Q$ used by the code is polynomial in $N$. Hayden et al.~\cite{NJP18} proposed a continuous-variable version of summoning and an efficient protocol to perform this task, as well as showing that optical circuits can be used to realize summoning experimentally. In 2016, Adlam and Kent proposed a summoning task with multiple summonses and provided a protocol, which employs teleportation, to accomplish this task for the configuration being an ordered set of causal diamonds~\cite{PhysRevA.93.062327}.

Our protocol for summoning quantum information uses a Calderbank-Shor-Steane (CSS) code~\cite{PhysRevA.54.1098,Steane2551}
\begin{equation}
\label{eq:NN-12}
	\left[\left[
		{\tilde{N}\choose2},
		1,\frac{\tilde{N}}{2}\right]\right],\;
	\tilde{N}=2\left\lceil\frac{N}{2}\right\rceil\in2\mathbb{Z},
\end{equation}
that encodes one qubit into $\tilde{N}\choose2$ physical qubits, with the restriction that $\tilde{N}$ is even. 
We calculate that this CSS code distance~$\frac{\tilde{N}}{2}$,
and this code is constructed from the relation between graphs and linear algebra~\cite{diestel2005graph,doi:10.1063/1.2731356}.
Our code is a qubit version of the homological continuous-variable quantum error correcting code~\cite{NJP18}
and corrects erasure errors that occur in summoning. 

We provide a procedure to construct the encoding and  decoding circuits for our CSS code for any even positive integer~$\tilde N$.
The number of qubits~$Q$ used by our protocol is reduced by a factor of two compared to the previous best~\cite{JPA49}, 
and the number of quantum gates~$G$ is reduced from $O\left(N^3\right)$~\cite{JPA49} to~$O\left(N^2\right)$. 
Our decoding procedure has gate complexity~$O(N)$. 
Our results are significant in that we complete the quantum summoning protocol~\cite{Kent2013,JPA49,NJP18}
by providing both encoding and decoding schemes,
explain how to utilize quantum error correction for summoning, 
analyze quantum resources,
and demonstrate improved efficiency for our protocol.

  
Our paper is organized as follows. Section~\ref{sec:background} reviews the background knowledge regarding summoning, quantum error correction and algebraic graph theory. Section~\ref{sec:definition} provides the mathematical definition of summoning. In section~\ref{sec:protocol}, we study a protocol for summoning using a CSS code, including the encoding and the decoding methods, and the resource analysis of the CSS code and the CWS code. Sections~\ref{sec:discussion} and~\ref{sec:conclusion} give the discussion and the conclusion, respectively.

\section{Background}
\label{sec:background}

Here we explain the quantum information processing task of summoning and the conditions for the configurations to make summoning feasible~\cite{Kent2013,JPA49,NJP18}. 
Then we give the background knowledge on stabilizer quantum error correcting codes~\cite{gottesman1997stabilizer}, especially CSS codes~\cite{PhysRevA.54.1098,Steane2551}.
We introduce the relation between graphs and binary vectors~\cite{diestel2005graph}, which is useful for the construction of the CSS stabilizer code.
Finally, we briefly review the CWS code~\cite{4729763} used by Hayden and May to summon quantum information~\cite{JPA49}.

\subsection{Summoning}
\label{subsec:summoning}
Summoning is an information processing task involving Alice and Bob~\cite{Kent2013,JPA49,NJP18}. Bob's role is to provide the quantum information to Alice
and to designate where the quantum information is to be summoned and Alice's role is to summon
quantum information at the designated spacetime location.
Associated to each request point~$y$ is a reveal point~$z_y$ that is in the causal future of $y$.
The intersection of the future light cone of~$y$ with the past light cone of~$z_y$ is called a causal diamond, expressed as~$\diam$.
We label causal diamonds and show a label~$i$ in the diamond as $\diamonded{i}$.
Besides the request and reveal points, Alice and Bob also agree upon a starting point $s$, where Bob provides the quantum information to Alice.

Alice and Bob can arrange their agents at various points in spacetime prior to the start of summoning~\cite{NJP18}.
 Bob designates one agent to be the referee who sends quantum information to point $s$ and classical information to all the request points. 
 Alice designates one agent to be the starting agent~$S$, who is situated at point $s$, and she delegates agents to each request and reveal point. 
 We label the agent at point~$x$ by $A_x$.
 Figure~\ref{fig:causaldiamond} shows an example of Alice's and Bob's agents arranged in spacetime.

\begin{figure}
\begin{center}
\includegraphics[width=1\linewidth]{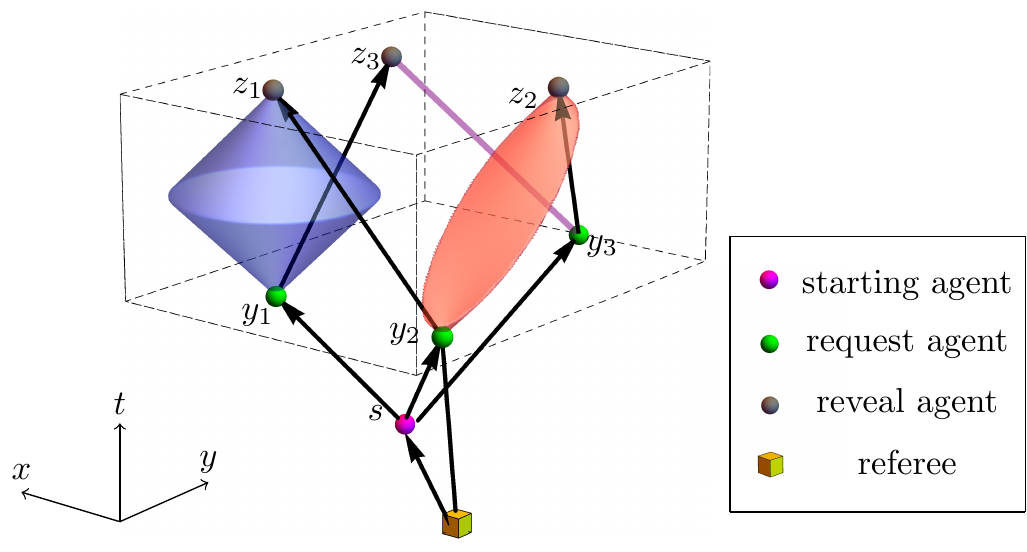}
\end{center} 
\caption{%
Three causal diamonds (red, blue and purple) in spacetime.
A referee, a starting agent, three request agents and three reveal agents are arranged in spacetime. 
An arrow represents a quantum communication channel from one agent to another agent, and a line segment between two agents represents a classical channel from one to the other. 
The referee sends a quantum state~$\ket{\psi}$ to the starting agent, and randomly chooses $y_2$ to send a classical request to~$A_{y_2}$. 
The starting agent encodes~$\ket{\psi}$ to three qutrits and distribute them to three request agents respectively. 
$A_{y_2}$ sends her qutrit to~$A_{z_2}$. Receiving no request,
 the request agents at~$y_1$ and~$y_3$ send their qutrits to~$A_{z_3}$ and~$A_{z_2}$ respectively. 
 Hence,~$A_{z_2}$ receives two qutrits and decodes the state~$\ket{\psi}$.}
\label{fig:causaldiamond}
\end{figure}

When summoning starts, the referee prepares a quantum state
\begin{equation}
\label{eq:ketpsi}
	\ket{\psi}\in \mathscr{H},
\end{equation}
where~$\mathscr{H}$ is a finite $d$-dimensional Hilbert space~\cite{JPA49},
and transmits~$\ket{\psi}$ to the starting agent. 
Alice and all her agents do not have any knowledge of~$\ket{\psi}$.  
The referee randomly chooses one request point, say~$y$, and sends the request only to~$A_y$. 
Then Alice's task is to present the quantum state~$\ket{\psi}$ at the corresponding reveal point $z_y$, by her agents' collaboration.

Given a set of causal diamonds $\left\{\diamonded{i} \right\}_{i=1}^{N}$, summoning might be infeasible~\cite{Kent2013} due to the restrictions of both the no-cloning theorem~\cite{Park1970,wootters1982single,dieks1982,Juan} and no superluminal communication~\cite{wootters1982single}. 
Summoning is possible if and only if the following two conditions are satisfied~\cite{JPA49}.  
\begin{itemize}
\item[C1] All reveal points are in the causal future of~$s$. 
\item[C2] Each pair of causal diamonds is causally related, which means that there exists a point in one causal diamond
that is causally related with at least one point in the other causal diamond. 
\end{itemize}

We call a set of causal diamonds satisfying these two conditions a ``valid configuration" for summoning. 
We represent a configuration of causal diamonds by a graph~$G$ as follows~\cite{JPA49,NJP18}. 
We assign each causal diamond to a vertex and use the label of the causal diamond to label the vertex. 
If two causal diamonds are causally related, an edge~$e$ is inserted between the two corresponding vertices in~$G$. 
A valid configuration of~$N$ causal diamonds is represented by an~$N$-vertex complete graph
denoted~$K_N$,
for which each pair of vertices is connected by an edge~\cite{diestel2005graph}.
 
In Fig.~\ref{fig:causaldiamond}, we present an example of using quantum secret sharing~\cite{PhysRevLett.83.648,PhysRevA.61.042311,PhysRevA.78.042309} 
to summon quantum information~\cite{JPA49}. After receiving~a qubit $\ket{\psi}$, the starting agent encodes~$\ket{\psi}$ into three qutrits~\cite{PhysRevLett.83.648} and 
 distributes the three qutrits to the three request agents.
 If~$A_{y_i}$ ($i=1,2,$ or $3$) receives the request, then the request agent sends her qutrit to the reveal point
\begin{equation}
\label{eq:revealpoint}
	z_i\coloneqq z_{y_i}.
\end{equation}
Otherwise, she sends her qutrit to the reveal point $z_{(i-1)\!\!\!\mod\!3}$. In such a way, no matter which request agent receives the request, 
 the associated reveal agent receives two qutrits to retrieve the original qubit~$\ket{\psi}$.

\subsection{Stabilizer codes}
\label{subsec:stabilizer}

In this subsection, we begin by introducing the Pauli group and the representation of Pauli operators using binary vectors. 
We explain the parameters characterizing a quantum error correcting code and the definition of erasure errors.
Then we explain stabilizer codes~\cite{gottesman1997stabilizer}, specially CSS codes~\cite{PhysRevA.54.1098,Steane2551}. Finally, we show the encoding of a stabilizer code.

An $n$-qubit Pauli group~\cite{preskill1998lecture} is
\begin{equation}
\mathcal G_n \coloneqq  \pm \{I, X, Y, Z\}^{\otimes n},
\end{equation}
where 
\begin{equation}
X\coloneqq \begin{bmatrix} 0 & 1 \\ 1 & 0 \end{bmatrix}, \,Z\coloneqq \begin{bmatrix} 1 & 0 \\ 0 & -1\end{bmatrix}, \, \text{and } Y\coloneqq ZX.
\end{equation}
The Pauli group module sign is isomorphic to a cartesian product of binary vector spaces according to 
\begin{equation}
\label{symplecticnotation}
\mathcal{I}_P: \mathcal{G}_n/\mathbb{Z}_2 \rightarrow \mathbb{Z}_2^n \times \mathbb{Z}_2^n: \bigotimes_{i=1}^n Z^{u_i} X^{v_i} \mapsto \begin{bmatrix} \bm{u} & \bm{v} \end{bmatrix},
\end{equation}
where 
\begin{equation}
\bm{u}\coloneqq \left[u_1 \dots u_n \right], \;
\bm{v}\coloneqq \left[v_1 \dots v_n \right].
\end{equation}  
 Two Pauli operators represented as binary vectors,
$\begin{bmatrix} \bm{u} & \bm{v} \end{bmatrix}$ and
$\begin{bmatrix} \bm{u}' & \bm{v}' \end{bmatrix}$,
mutually commute if and only if~\cite{preskill1998lecture}
\begin{equation}
\label{commutecondition}
	\bm{u} \cdot \bm{v}' + \bm{v} \cdot \bm{u}' =0,
\end{equation}
where~$\cdot$ is the indefinite inner product
\begin{equation}
\label{innerproduct}
\bm{u}\cdot \bm{v}\coloneqq \sum_{i=1}^n u_i v_i\in \mathbb{Z}_2.
\end{equation}
Otherwise, these two Pauli operators anti-commute with each other.

In quantum error correcting codes, $[[n, k, d]]$ denotes a quantum code, 
where~$k$ qubits are encoded into~$n$ qubits, and~$d$ is the distance of the code~\cite{preskill1998lecture}.
Given a Pauli operator $P\in \mathcal G_n$, the weight of~$P$ is the number of nonidentity single-qubit Pauli operators, i.e., $X$,~$Y$ and~$Z$ in the tensor product~$P$. 
The distance of a quantum error correcting code is the minimum weight of a Pauli operator $P$ such that
\begin{equation}
\label{distance}
\langle i|P\ket{j}\neq C(P)\delta_{ij},
\end{equation}
where $\ket{i}$ and $\ket{j}$ are basis elements of the code, $C(P)$ is a constant depending on~$P$, and $\delta_{ij}$ is the Kronecker delta function.
When transmitting a block of $n$ qubits, if $t$ qubits ($t<n$) are lost or never received, while the other $n-t$ qubits are undamaged, then the errors at these $t$ qubits are called erasure errors~\cite{PhysRevA.56.33}.

A stabilizer code~\cite{gottesman1997stabilizer} is the simultaneous eigenspace of all the elements of an Abelian subgroup $\mathcal S$ of $\mathcal G_n$ with eigenvalue one.
A generator set of~$\mathcal S$ is a set of independent elements in~$\mathcal S$ 
such that every element of~$\mathcal S$ can be expressed as a product of the elements in this generator set.
An $[[n, k, d]]$ stabilizer code has $n-k$ independent stabilizer generators, each of which can be represented by a $2n$-dimensional binary vector. 
The $[[n, k, d]]$ stabilizer code is characterized by an $(n-k)\times 2n$ stabilizer generator matrix, where each row represents a stabilizer generator.

Now we introduce CSS codes, which are a type of stabilizer codes. 
A CSS code~\cite{PhysRevA.54.1098,Steane2551} is specified by two classical linear codes $\mathcal{C}_1$ and $\mathcal{C}_2$, where $\mathcal{C}_2$ is a subcode of $\mathcal{C}_1$, i.e.,
\begin{equation}
\mathcal{C}_2 \subseteq \mathcal{C}_1.
\end{equation}
Each basis element of a CSS code corresponds to a coset~\cite{roman2005advanced} of~$\mathcal{C}_2$ in~$\mathcal{C}_1$, where the basis element is an equally weighted superposition of all the codewords in the coset.
A CSS code is a stabilizer code whose stabilizer generators are either tensor products of $X$ operators and identities, or tensor products of $Z$ operators and identities~\cite{gottesman1997stabilizer}. 
Hence, the CSS stabilizer code is characterized by an $(n-k)\times 2n$ stabilizer generator matrix 
\begin{equation}
\label{CSScode}
	\begin{bmatrix}
		H_Z &\bm0 \\
		\bm0 &H_{X}
	\end{bmatrix},
\end{equation}
where $H_Z$ and $H_X$ are two matrices, and the $\bm0$s are appropriately sized zero matrices.

Here we explain the encoding of a stabilizer code with stabilizer~$\mathcal S$.
The Pauli operators that preserve the stabilizer code space but act nontrivially on the encoded state are the logical Pauli operators on the encoded state~\cite{gottesman1997stabilizer}. 
The logical Pauli operators commute with all stabilizers in~$\mathcal S$ but lie outside~$\mathcal S$. 

For an $[[n ,1, d]]$ stabilizer code with stabilizer~$\mathcal S$, we denote $\bar{Z}$ and $\bar{X}$ as the logical $Z$ and logical $X$ operators on an $n$-qubit encoded state. 
Suppose
\begin{equation}
\bar{Z} \ket{\psi_0}=\ket{\psi_0},
\end{equation}
and $$\left\{M_i\right\}_{i=1}^{n-1}$$ are $n-1$ independent stabilizer generators of~$\mathcal S$.
The encoded logical states are~\cite{gottesman1997stabilizer}
\begin{equation}
\label{encodingzero}
\ket{0}_L\coloneqq \prod_{i=1}^{n-1} (I+M_i) \ket{\psi_0},
\end{equation}
and  
\begin{equation}
\label{encodingone}
\ket{1}_L\coloneqq \bar{X}\prod_{i=1}^{n-1} (I+M_i) \ket{\psi_0}=\prod_{i=1}^{n-1} (I+M_i) \bar{X}\ket{\psi_0}.
\end{equation}
Equations (\ref{encodingzero}) and~(\ref{encodingone}) indicate how to encode one qubit by a stabilizer code.

We have discussed quantum error correction, especially stabilizer codes in this subsection. 
Next we study the close relation between a graph and a binary vector, which provides a powerful tool to construct the CSS stabilizer code.

\subsection{Graphs and linear algebra}

In this subsection, 
we begin by defining graphs and the binary linear space.
After explaining these two concepts, 
 we describe an isomorphism from sets of edges of
 an $n$-vertex graph to binary vectors with length $n\choose2$~\cite{diestel2005graph}.
Our approach is inspired by homology theory to construct quantum error correcting codes~\cite{NJP18,doi:10.1063/1.2731356}.
Finally, we present the examples of triangle graphs and star graphs.

A graph~\cite{diestel2005graph}
\begin{equation}
G\coloneqq \left( V, E\right)
\end{equation}
comprises a set of vertices~$V$ and a set of edges
\begin{equation}
E\subseteq V \times V.
\end{equation}
One example of a graph is the $n$-vertex complete graph,~$K_{n}$.

To explain the binary linear space, we introduce $GF(2)$, which is the smallest finite field containing two elements $\{0, 1\}$, together with addition and multiplication operations~\cite{lidl1997finite}.
The linear space~\cite{roman2005advanced} over field $GF(2)$, denoted by $\mathbb{Z}_2^m$, is a set $\{0,1\}^m$, together with vector addition,
\begin{equation}
+: \mathbb{Z}_2^m\times \mathbb{Z}_2^m \rightarrow \mathbb{Z}_2^m,
\end{equation}
 and scalar multiplication,\footnote{Note we use $\cdot$ for the scalar multiplication only in Eq.~(\ref{scalarmultiplication}) and Table~\ref{isomorphism}. After this subsection, we use $\cdot$ only for the indefinite inner product.}
\begin{equation}
\label{scalarmultiplication}
\cdot : GF(2)\times \mathbb{Z}_2^m \rightarrow\mathbb{Z}_2^m.
\end{equation}
 Next we explain the relation between an edge set~$E$ of a graph~$G$ and a binary vector with length~$|V|\choose2$, where $|V|$ is the cardinality of~$V$.

Given $K_n=(V_K, E_K)$, the power set of~$E_K$, which is the set of all the subsets of $E_K$, forms a binary linear space~$\mathscr{E}$~\cite{diestel2005graph}.  
The power set of $E_K$ is denoted $2^{E_K}$.
For $U$, $U' \in 2^{E_K}$, the addition of~$U$ and~$U'$ amounts to the symmetric difference of~$U$ and~$U'$,
\begin{equation}
U+U'\coloneqq  \left( U \cup U' \right)\setminus \left(U \cap U' \right).
\end{equation}
The empty set~$\O$ is the zero element and 
\begin{equation}
\forall\, U\in 2^{E_K},\, -U\coloneqq U.
\end{equation}

For $\diamonded{i}$ and $\diamonded{j}$,
an edge $e_{ij}$ is a unit vector in~$\mathscr{E}$, and $e_{ji}=e_{ij}$ because we are dealing with undirected graphs. 
The set of edges
\begin{equation}
\label{eq:edges}
	\left\{e_{ij}\right\}_{1\le i<j \le n}
\end{equation}  
forms an orthonormal basis of~$\mathscr E$. 
As ${n\choose2}$ edges exist in~$K_{n}$, 
\begin{equation}
	\operatorname{dim}\mathscr{E}
		={n\choose2}.
\end{equation}

Now we show that~$\mathscr{E}$ is isomorphic to~$\mathbb{Z}_2^{n\choose2}$~\cite{diestel2005graph}.
Given any $U\in \mathscr E$, an isomorphism is
\begin{equation}
\label{iso:graphvector}
\mathcal{I}_G: \mathscr E\rightarrow  \mathbb{Z}_2^{n\choose2} : U\mapsto \bm{u}=\left[ u_1\, \dots\, u_{n\choose2} \right],
\end{equation}
where $$\{u_i\}_{i=1}^{n\choose2}$$ are the coefficients of~$U$ with respect to the basis in Eq.~(\ref{eq:edges}).
The isomorphic mappings of the vector addition and the scalar multiplication are shown in Table~\ref{isomorphism}.

\begin{table}[b]
\begin{center}
\begin{tabular}{|c|c|c|}
\hline 
 & $\mathscr{E}$ & $\mathbb{Z}_2^{n\choose 2}$ \\
 \hline
 $+$ & $\left( U \cup U' \right)\setminus \left(U \cap U' \right) $ &$ \bm{u}+\bm{u}'$ \\
 \hline
 $\cdot$ & $0 U=\O, \, 1 U=U $ & $0 \bm{u}=\bm{0},\, 1 \bm{u}=\bm{u}$ \\
\hline 
\end{tabular}
\end{center}
\caption{The mapping of the vector addition and the scalar multiplication on $\mathscr E$ to those operations on $\mathbb{Z}_2^{n\choose 2}$}
\label{isomorphism}
\end{table}

\begin{figure}
\begin{center}
\includegraphics[width=1\linewidth]{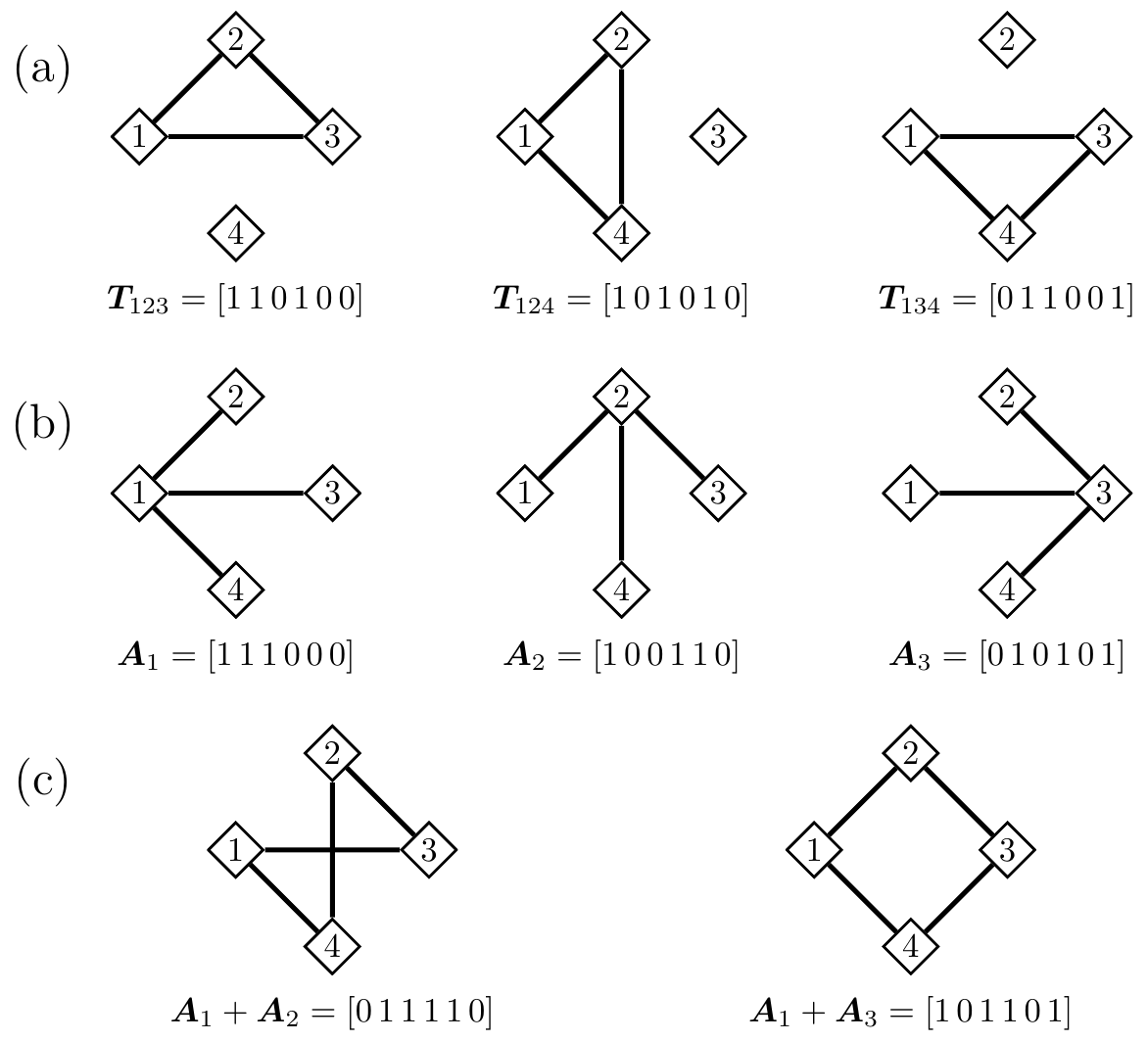}
\end{center} 
\caption{For $n=4$, (a) the triangle graphs representing $\bm{T}_{1jk}$ ($2\le j < k \le 4$), (b) the star graphs representing $\bm{A}_l$ ($1\le l\le 3$) and (c)
the graphs representing $\bm{A}_1+\bm{A}_m$ ($2\le m\le 3$).}
\label{fig:CSSgraph}
\end{figure}

Here we introduce two types of $n\choose2$-dimensional binary vectors and two linear subspaces spanned by these two types of vectors as examples of the isomorphism~(\ref{iso:graphvector}). 
These examples are used later for the construction of the CSS stabilizer code.
The two types of vectors in~$\mathbb{Z}_2^{n\choose2}$ are 
\begin{equation}
 \bm{T}_{ijk} \coloneqq \bm{e}_{ij}+\bm{e}_{jk}+\bm{e}_{ki},\;
  \bm{A}_{j} \coloneqq \sum_{1\le l \le n, \,  l\not=j }\bm{e}_{lj},
\end{equation}
where $\bm{e}_{ij}$ is the unit vector $\mathcal {I}_{G}(e_{ij})$.
From the isomorphism~(\ref{iso:graphvector}), these two types of vectors can be represented by two different types of graphs.  
~$\bm{T}_{ijk}$ is represented by a triangle graph connecting vertices
\begin{equation}
	\left\{\diamonded{i},\diamonded{j},\diamonded{k}\right\},
\end{equation}
and~$\bm{A}_{j}$ is represented by a star graph with vertex~$\diamonded{j}$ connected to every other vertex.

We construct the linear space 
\begin{equation}
\label{eq:cutspace}
	\mathcal{C}_1
		\coloneqq  \operatorname{span} \left\{\bm{A}_1, \bm{A}_2, \dots, \bm{A}_{n-1}\right\}
\end{equation}
spanned by $n-1$ linearly independent $\left\{\bm{A}_{j}\right\}$,
and the orthogonal linear space
\begin{equation}
\label{eq:cyclespace}
	\mathcal{C}_1^\perp
		\coloneqq \operatorname{span}
			\left\{\bm{T}_{123},\bm{T}_{124}, \dots,\bm{T}_{12n},\bm{T}_{134},
				\dots,\bm{T}_{1\, n-1\, n}\right\}
\end{equation}
spanned by ${n-1\choose2}$ linearly independent $\left\{\bm{T}_{ijk}\right\}$.
The elements in~$\mathcal C_1^\perp$ are represented by Eulerian cycles (graph cycles that use each edge exactly once)~\cite{diestel2005graph}.
Meanwhile, $\mathcal{C}_1$ comprises vectors orthogonal to all vectors in~$\mathcal{C}_1^\perp$~\cite{diestel2005graph}, i.e.,
\begin{equation}
\label{perpofperp}
\mathcal{C}_1=\left(\mathcal{C}_1^{\perp}\right)^\perp.
\end{equation}

We introduce an ($n-2$)-dimensional linear subspace of~$\mathcal{C}_1$
\begin{equation}
\label{eq:pairs}
	\mathcal{C}_2
		\coloneqq \operatorname{span}
			\left\{\bm{A}_1+ \bm{A}_2, \bm{A}_1+\bm{A}_3 \dots,\bm{A}_1+ \bm{A}_{n-1}\right\}
				\subset\mathcal{C}_1.
\end{equation}
$\mathcal C_2$ together with $\mathcal C_1$ specifies the CSS code for summoning in Subsec.~\ref{subsec:stabilizer}.
Figure~\ref{fig:CSSgraph}(a), (b) and (c) depict the graphs representing bases of linear spaces~$\mathcal{C}_1^\perp$,~$\mathcal C_1$ and~$\mathcal C_2$ respectively for $n=4$.

This subsection has shown that the power set of the edge set of an $n$-vertex complete graph forms a binary linear space, isomorphic to $\mathbb{Z}_2^{n\choose2}$. 
Hence, we have constructed a graph representation of any $n\choose2$-binary vector. 
The examples given in this subsection are useful to construct the CSS code for summoning.

\subsection{The CWS code for summoning}
 In this subsection, we first introduce CWS codes~\cite{4729763}, then explain the graph-state formalism of CWS codes. Finally we discuss the CWS code used by Hayden and May for quantum summoning~\cite{JPA49}.  In Subsec.~\ref{subsec:comparison}, we study the gate complexity of the encoding of the CWS code for summoning and compare it with the CSS code.
 
An $((n,k))$ CWS code~\cite{4729763} encodes a $k$-dimensional Hilbert space to $n$ qubits. This CWS code is specified by a word stabilizer, which is a $2^n$-element Abelian subgroup~$\mathcal S$ of $\mathcal G_n$, 
and a set of $k$ word operators, which are $n$-qubit Pauli operators $$\{W_l\}_{l=1}^k.$$  
The word stabilizer $\mathcal S$ specifies a unique $\ket{\psi_{\mathcal S}}$ such that $\forall M\in\mathcal S$,
\begin{equation}
M\ket{\psi_{\mathcal S}}=\ket{\psi_{\mathcal S}}.
\end{equation}
The CWS code is spanned by the basis 
\begin{equation}
\{\ket{w_l}\coloneqq  W_l \ket{\psi_{\mathcal S}}\}_{l=1}^k.
\end{equation}

Under local Clifford operations, any CWS code is equivalent to its standard form~\cite{4729763}, whose word stabilizer is a graph-state stabilizer~\cite{PhysRevA.68.022312}, 
and whose word operators contain only $Z$ operators and identities.
Thus, the stabilized state~$\ket{\psi_{\mathcal S}}$ of a CWS code in its standard form is a graph state.
Given a graph~$G=(V,E)$, the associated graph state is~\cite{PhysRevA.68.022312}
\begin{equation}
\ket{G}=\prod_{(i,\, j)\in E}CZ_{(i,\,j)}  H^{\otimes |V|} \ket{0}^{|V|},
\label{graphstate}
\end{equation}
 where~$CZ_{(i\,j)}$ is the controlled-$Z$ gate with control qubit~$i$ and target qubit~$j$, and~$H$ is the Hadamard gate.
 
  \begin{figure}
\includegraphics[width=0.8\linewidth]{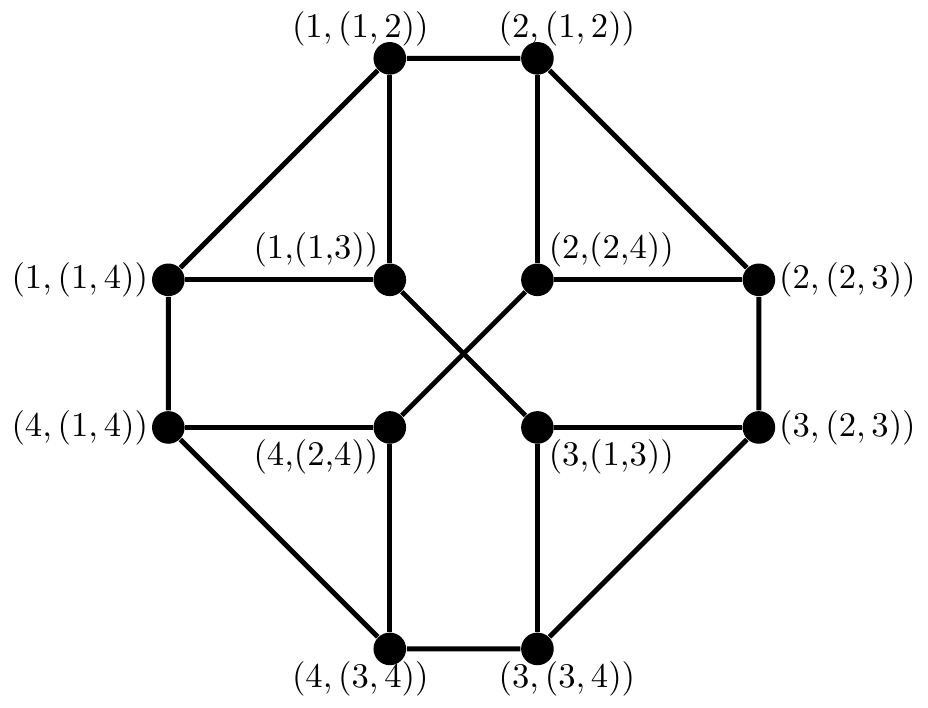}
 \caption{$G_{\text{CWS}}$ for $N=4$. Each vertex of~$G_{\text{CWS}}$ is labeled by $(j, (j, k))$ for $1\le j, k\le 4$ and $k\neq j$. Each $(j, (j, k))$ is adjacent to $(k,(j, k))$ and $(j,(j,l))$, where $1\le l \le 4$ and $l\neq j$ or $k$.}
\label{fig:CWSgraph}
 \end{figure}

Hayden and May propose a $\left(\left(2{N\choose2},2\right)\right)$ CWS code to summon a qubit in~$N$ causal diamonds~\cite{JPA49}.
This $\left(\left(2{N\choose2},2\right)\right)$ CWS code is specified by a graph-state stabilizer represented by a graph $G_{\text{CWS}}$ 
and two word operators 
\begin{equation}
\label{codewordoperators}
\left\{I, Z^{\otimes N(N-1)}\right\}.
\end{equation}
Given an $N$-vertex complete graph $K_{N}=\{V_K, E_K\}$, $G_{\text{CWS}}$ is the line graph~\cite{diestel2005graph} of $G'\coloneqq \{V', E'\}$, where 
\begin{equation}
V'=V_K\cup E_K,
\end{equation}
and
\begin{equation}
E'=\left\{(v,(v,w)); v\in V_K, (v,w)\in E_K\right\}.
\end{equation}
Figure~\ref{fig:CWSgraph} presents~$G_{\text{CWS}}$ for~$N=4$.
 This CWS code can be used to summon one qubit in four causal diamonds~\cite{JPA49} by employing twelve qubits. 

This section has introduced quantum summoning as a quantum information processing task in spacetime and explained the conditions on the configurations to make quantum summoning feasible.
We have briefly reviewed the background knowledge on quantum error correction, especially stabilizer codes and CSS codes, which are used for quantum summoning in this paper.
We have explained the connection between graphs and binary vectors, which is useful for the construction of the CSS code for summoning.
Finally, we have reviewed the CWS code used for quantum summoning~\cite{JPA49}, with which we compare our CSS code in Subsec.~\ref{subsec:comparison}.

\section{Mathematical Definition of Summoning}
\label{sec:definition}

In this section, we mathematically define both classical and quantum summoning. 
We begin by formalizing the notions of past and future light cones and causal diamonds. 
We than establish notations for a configuration of causal diamonds and the sets of request and reveal points.
Subsequently, we give a careful definition of both classical and quantum summoning, and when these tasks are trivial. 
 
Each spacetime point is $x\in \mathcal M$,
where~$\mathcal M$ denotes Minkowski spacetime~\cite{spacetime}. 
The future light cone for $x$ is 
\begin{equation}
        \operatorname{fut}(x)\coloneqq \left\{ w\in \mathcal M; w\succ x\right\},
\end{equation}
where $w \succ x$ indicates that information can be sent from~$x$ to $w$.
 The past light cone for $x$ is
\begin{equation}
	\operatorname{pas}(x)\coloneqq \left\{w\in \mathcal M; w\prec x \right\},
\end{equation}
where $w\prec x$ indicates that information can be received at $x$ from $w$.
A causal diamond for a pair of points $(y_i, z_i)\in \mathcal M \times \mathcal M$ satisfying $y_i\prec z_i$ is 
\begin{equation}
\diamonded{i} \coloneqq \left\{x\in \mathcal M; x\in\operatorname{fut}(y_i)\cap \operatorname{pas}(z_i)\right\}.
\end{equation}
 A configuration of causal diamonds is
\begin{equation}
\label{eq:configuration}
	 \mathscr{C}\coloneqq  \left\{\diamonded{i}; y_i\prec z_i\right\},
\end{equation}
  and $N\coloneqq |\mathscr C|$. 
 Two causal diamonds~$\diamonded{i}$ and~$\diamonded{j}$ are causally related if and only if $\exists \,x\in \diamonded{i}$, $\exists \, w\in \diamonded{j}$ such that either $x\in \operatorname{fut}(w)$, or $x\in\operatorname{pas}(w)$.
  The set of request points is 
\begin{equation}
 \text{REQ}\coloneqq \left\{y_i\in \mathcal M;\diamonded{i}\in \mathscr C\right\},
\end{equation} 
 and the set of reveal points is 
\begin{equation}
 \text{REV}\coloneqq \left\{z_i\in \mathcal M; \diamonded{i}\in \mathscr C\right\}.
\end{equation} 
 A starting point is $s\in\mathcal M$ such that $\forall \, z\in \text{REV}, z\in \operatorname{fut}(s)$, where  
there is a starting agent $S$.

We now formalize Kent's classical summoning protocol~\cite{Kent2013} by making each object mathematically well defined.
Given starting agent~$S$ at $s\in\mathcal{M}$,
request agents 
\begin{equation}
	\text{REQAG}\coloneqq \left\{A_y; y\in\text{REQ}\right\}
\end{equation}
and corresponding reveal agents 
\begin{equation}
	\text{REVAG}\coloneqq \left\{A_z; z\in\text{REV}\right\},
\end{equation}
and~$S$ possessing $n$-bit string $m\in\{0,1\}^n$,
summoning is the task of delivering~$m$ to any agent in REVAG
given arbitrary external selection of some~$y\in\text{REQ}$, which is
only revealed at spacetime point~$y$.

\begin{remark}
Classical summoning is trivial because $S$ broadcasts $m$ to all $z\in \text{REV}$~\cite{Kent2013}.
\end{remark}

The notion of quantum summoning~\cite{Kent2013,JPA49} builds on the concept of classical summoning,
which we formalize as follows.
Given a starting agent~$S$, REQAG and REVAG, and~$S$ possessing quantum information
\begin{equation}
	\ket{\Psi}\in \mathscr{H}_2^{\otimes n}
\end{equation}
(with $S$ possibly oblivious to~$\ket{\Psi}$), summoning is the task of delivering~$\ket{\Psi}$ to any agent in REVAG given arbitrary external selection of some~$y\in\text{REQ}$, which is
only revealed at spacetime point~$y$.

\begin{remark}
Summoning is trivial if S has a classical description of~$\ket{\Psi}$ because S broadcasts this description such that all agents in REVAG receive and can reconstruct~$\ket{\Psi}$.
\end{remark}
\begin{remark}
Quantum summoning is trivial if there is a causal curve, which starts from~$s$ and runs sequentially through all $\diamonded{i}\in\mathscr C$ in any order.
The protocol is trivial in this case
because quantum information can simply be sent along this causal curve.
When quantum information arrives at $\diamonded{j}$, $A_{y_j}$ decides whether to send it to $z_j$ or to send it to the next causal diamond depending on whether she receives  the request or not.
\end{remark}

\section{Summoning by Quantum Error Correction}
\label{sec:protocol}
Here we present a protocol for Alice to summon quantum information. 
We specify the actions that the starting agent and each of the request and reveal agents perform to fulfill any summoning request~(Subsec.~\ref{subsec:protocol}). 
For any valid configuration of causal diamonds, we propose a CSS code for the protocol of quantum summoning~(Subsec.~\ref{subsec:stabilizer}).
The encoding and decoding circuits of the CSS code are provided~(Subsec.~\ref{subsec:encodedecode}).  
 We show that the CSS code consumes fewer quantum resources than the CWS code~\cite{JPA49}~(Subsec.~\ref{subsec:comparison}).

\subsection{Protocol}
\label{subsec:protocol}

\begin{figure*}[t]
\begin{center}
\includegraphics[width=0.95\linewidth]{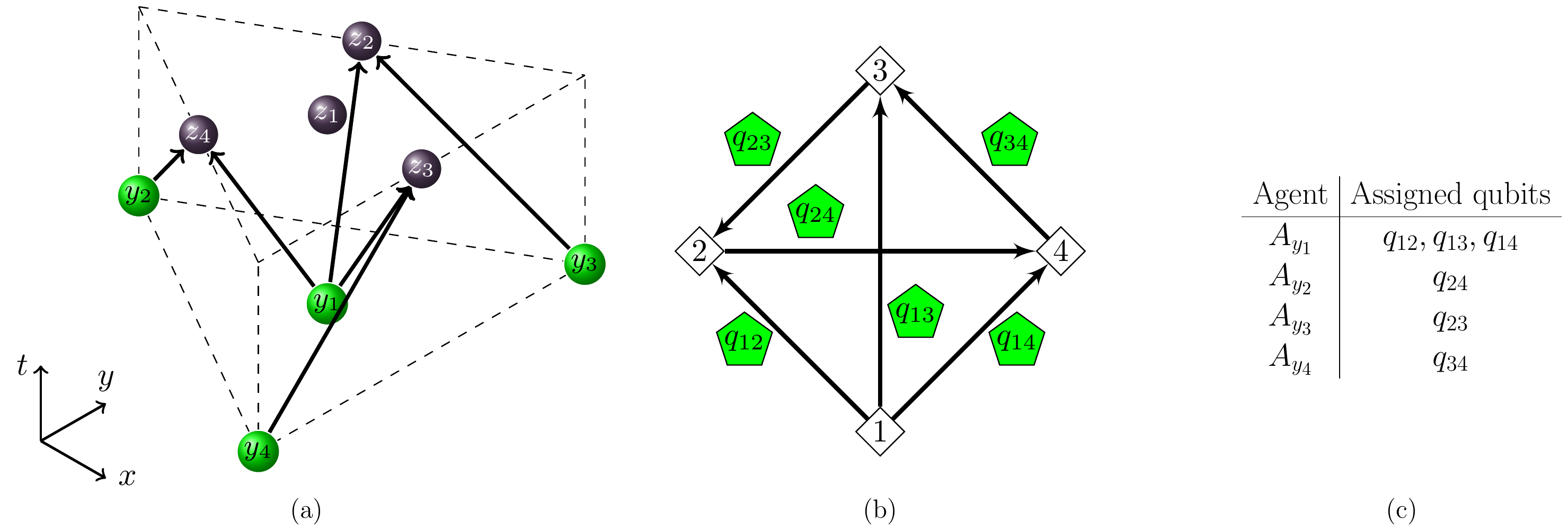}
\end{center}
\caption{(a) A configuration of four causal diamonds in $2+1$ dimensions.
Three request points ($y_2,y_3,y_4$) are placed at the base vertices of
a equilateral triangular prism and a fourth ($y_4$) is placed at the
centroid of base vertices. The reveal points are placed at the midpoints
of the top vertices ($z_1, z_2, z_3$) and the centroid of the top
vertices. The volume of the diamond is not shown for visual clarity. The
black arrows represent causal connections between points. (b) A complete graph representing the causal connections between
the diamonds depicted in (a). For the
CSS code the qubit $q_{ij}$ is assigned to edge $e_{ij}$. (c) A table showing which requests agents is each physical qubit sent to.}
\label{fig:fourvertextournamentprism2}
\end{figure*}

 Here we propose a protocol using the 
 CSS code~(\ref{eq:NN-12})
for summoning one qubit in any valid spacetime configuration.
This CSS code assigns one qubit to each edge of the complete graph~$K_{\tilde{N}}$;
hence, the number of qubits used by the protocol is
\begin{equation}
\label{eq:Q}
	Q= {\tilde{N}\choose2},
\end{equation}
for~$N$ and~$\tilde N$ related according to Eq.~(\ref{eq:NN-12}).

For a spacetime configuration with an even number~$N$ of causal diamonds, 
$\tilde{N}\coloneqq N$
and~$S$ employs the CSS code~(\ref{eq:NN-12})
to encode a qubit,
\begin{equation}
\label{qubitstate}
\ket{\psi}=\alpha \ket{0} +\beta \ket{1},
\end{equation}
 into~$Q$ (\ref{eq:Q})
  qubits and assigns each qubit to an edge of the complete graph~$K_{\tilde{N}}$.
The qubit assigned to the edge $e_{ij}$ is called~$q_{ij}$, where $e_{ij}$ denotes the edge connecting~$\diamonded{i}$ to~$\diamonded{j}$ for $i,j\in[N]$ and $i\neq j,$.

$S$ sends $q_{ij}$ to $A_{y_i}$ if 
\begin{equation}
y_i \prec z_j,
\end{equation}
 and to $A_{y_j}$ if 
\begin{equation}
  y_j \prec z_i.
\end{equation}
  If $A_{y_i}$ receives the summoning request,  she sends all the qubits in her possession to $A_{z_i}$. 
  Otherwise, she sends each qubit $q_{ij}$ in her possession to~$A_{z_j}$. 
  As each vertex is adjacent to $N-1$ edges, any reveal agent, who receives the summoning request, receives $N-1$ qubits. 
  Later, we prove that $\forall \, r\in [N]\coloneqq (1\,2\,\cdots N)$, the qubits  
  \begin{equation}
  \label{accesseven}
 \left \{q_{rk}; k\in[N]\setminus \{r\} \right \}
  \end{equation} can be used to decode~$\ket{\psi}$ perfectly.
  {Fig.~(\ref{fig:fourvertextournamentprism2}) shows how the qubits are assigned to the request agents for a configuration of four causal diamonds.

 In a configuration of an odd number of causal diamonds,
\begin{equation}
 \tilde N\coloneqq N+1.
\end{equation}
 $S$ introduces one more vertex $$\diamonded{N \!\!+\!\!1}$$ to obtain graph $K_{N+1}$. 
 This new vertex can be seen as ficticious causal diamond causally related with every causal diamond, but the summoning request is never sent to this causal diamond.
 Then~$S$ employs the CSS code~(\ref{eq:NN-12}),
which encodes~$\ket{\psi}$ into ${N+1\choose2}$ qubits. 
As before,~$S$ sends each qubit $q_{ij}$, where $i,j\in [N]$,
to $A_{y_i}$ if
\begin{equation}
y_i\prec z_j,
\end{equation}
  and to $A_{y_j}$ if 
\begin{equation}
 y_j \prec z_i.
\end{equation}
$S$ sends each additional qubit~$q_{j\, N+1}$ to reveal agent $A_{z_j}$.
As in the even case, if $A_{y_i}$ receives the summoning request,~$A_{y_i}$ sends all the qubits in her possession to $A_{z_i}$. 
  Otherwise, she sends each qubit~$q_{ij}$ in her possession to~$A_{z_j}$. 
 Any reveal agent who receives the summoning request,
 ultimately receives~$N$ qubits to decode~$\ket{\psi}$. 
 For any $r\in [N+1]$, the qubits  
  \begin{equation}
  \label{accessodd}
  \left\{q_{rk}; k\in[N+1]\setminus \{r\} \right\}
  \end{equation} can be used to decode~$\ket{\psi}$ perfectly. 
  
  This protocol can also be used to accomplish a variant of the summoning task~[4]. In this modified task, the set of causal diamonds are causally ordered and the request may be sent to multiple request agents but only one of them needs to comply with the summon. If the causal diamonds are causally ordered, then in the subset of request agents to whom the requests are sent,
\begin{equation}
  \{A_{y_i}; i\in K\subseteq [N] \},
\end{equation}
let $A_{y_r}$ be the earliest request agent according to the causal order of the diamonds. By following the above protocol, the reveal agent~$A_{z_r}$ receives the $N-1$ qubits if $N$ is even and the $N$ qubits if $N$ is odd. Hence,~$A_{z_r}$ can decode the state~$\ket{\psi}$.

   This subsection has explained our protocol for quantum summoning.  Subsection~\ref{subsec:stabilizer} provides the detail of the CSS code~(\ref{eq:NN-12}) used in our protocol.

\subsection{The CSS code}
\label{subsec:stabilizer}
In this subsection, we propose a stabilizer code with each qubit assigned to an edge of~$K_{\tilde{N}}$. 
 We show that it is an
$\left[\left[
		{\tilde{N}\choose2},
		1,\frac{\tilde{N}}{2}\right]\right]$
	CSS code, which can be used to summon a qubit by following the protocol in Subsec.~\ref{subsec:protocol}.

 \begin{theorem}
 \label{theorem:CSScode}
 The stabilizer code, specified by a $$\left[{\tilde{N}\choose2} -1\right]\times 2{\tilde{N}\choose 2}$$ stabilizer generator matrix 
 \begin{equation}
	H_{\tilde{N}}=\begin{bmatrix}[c|c]
		\bm{T}_{123} & \bm0 \\
		\bm{T}_{124} &\bm0 \\
			\vdots & \vdots \\
		\bm{T}_{1 2 \tilde{N}} & \bm0\\
		\bm{T}_{134} &\bm0 \\
			\vdots & \vdots \\
		\bm{T}_{1\, \tilde{N}-1\, \tilde{N}} & \bm0\\
		\hline
		\bm0 & \bm{A}_1+\bm{A}_2 \\
		\bm0 & \bm{A}_1+\bm{A}_3 \\
			\vdots & \vdots \\
		\bm0 & \bm{A}_1+\bm{A}_{\tilde{N}-1} \\
	\end{bmatrix},
\label{stabilizergenerator}
\end{equation}
  where $\bm0$  is an $  {\tilde{N}\choose2}$-dimensional zero vector,
  is an $\left[\left[
		{\tilde{N}\choose2},
		1,\frac{\tilde{N}}{2}\right]\right]$ CSS code, which can correct erasure errors at qubits $q_{ij}$ for $i, j\in[N]\setminus \{r\}$  for any $r\in \left[\tilde{N}\right]$. 
 \end{theorem}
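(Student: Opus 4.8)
The plan is to establish the three assertions---that $H_{\tilde N}$ defines a CSS code, that its parameters are $\left[\left[\binom{\tilde N}{2},1,\tfrac{\tilde N}{2}\right]\right]$, and that it corrects the stated erasures---by passing through the graph/vector isomorphism~(\ref{iso:graphvector}), so that every algebraic condition becomes a statement about subgraphs of $K_{\tilde N}$.

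First I would confirm that $H_{\tilde N}$ is a legitimate CSS generator matrix and read off $n$ and $k$. The $Z$-rows are the $\binom{\tilde N-1}{2}$ independent generators of $\mathcal C_1^\perp$ from~(\ref{eq:cyclespace}) and the $X$-rows are the $\tilde N-2$ independent generators of $\mathcal C_2$ from~(\ref{eq:pairs}). Because $\mathcal C_2\subset\mathcal C_1=(\mathcal C_1^\perp)^\perp$ by~(\ref{perpofperp}), every $\bm T$-row is orthogonal under the inner product~(\ref{innerproduct}) to every $\bm A_1+\bm A_j$ row, so the commutation condition~(\ref{commutecondition}) holds and the code is a (CSS) stabilizer code. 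The qubit number is $n=\binom{\tilde N}{2}$, and the independent generators number $\binom{\tilde N-1}{2}+(\tilde N-2)=\binom{\tilde N}{2}-1=n-1$, forcing $k=1$.

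The heart of the argument is the distance. Using the symplectic representation I would note that a nontrivial logical operator has $Z$-part in $\mathcal C_2^\perp\setminus\mathcal C_1^\perp$ or $X$-part in $\mathcal C_1\setminus\mathcal C_2$, and that its weight is at least the weight of whichever component is nontrivial; hence $d$ equals the smaller of the two pure-type minimum weights. For the $X$-side, $\mathcal C_1$ is the cut space and $\mathcal C_2$ is exactly the \emph{even} cuts $\delta(S)$ (the codimension-one subspace of $\mathcal C_1$ of the right dimension $\tilde N-2$), so $\mathcal C_1\setminus\mathcal C_2$ consists of odd cuts, whose weight $|S|(\tilde N-|S|)$ is minimized at $|S|=1$, giving $\tilde N-1$. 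For the $Z$-side, the key lemma is that $\bm x\cdot(\bm A_1+\bm A_j)=\deg_{\bm x}(1)+\deg_{\bm x}(j)\ (\mathrm{mod}\ 2)$, so membership in $\mathcal C_2^\perp$ forces all vertex degrees to share one parity (the handshake identity together with $\tilde N$ even fixes the parity of the last vertex); thus $\mathcal C_2^\perp$ is the even-degree subgraphs ($=\mathcal C_1^\perp$) together with the all-odd-degree subgraphs, the latter being precisely $\mathcal C_2^\perp\setminus\mathcal C_1^\perp$. An all-odd-degree subgraph gives each of the $\tilde N$ vertices degree at least one, hence has at least $\tilde N/2$ edges, with equality for a perfect matching; so the $Z$-distance is $\tilde N/2$ and $d=\min\{\tilde N-1,\tilde N/2\}=\tilde N/2$. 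I expect this degree-parity characterization of $\mathcal C_2^\perp$ and its matching lower bound to be the main obstacle; the remainder is bookkeeping.

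Finally, for the erasure claim I would invoke the standard criterion that erasures on a qubit set $E$ are correctable precisely when no nontrivial logical operator is supported inside $E$. Here $E$ is the edge set of $K_{\tilde N}$ avoiding vertex $r$, with complementary support the star $\{q_{rk}\}$. Since $r$ is adjacent to every other vertex, every nonzero cut and every all-odd-degree subgraph must use an edge at $r$; therefore no nontrivial $X$- or $Z$-logical can live in $E$, and because a Pauli supported on $E$ has both symplectic components supported on $E$, the only normalizer elements confined to $E$ are stabilizers. Equivalently, I can display the representatives directly: $\bm A_r=\delta(\{r\})$ is an odd cut, so it lies in $\mathcal C_1\setminus\mathcal C_2$ and furnishes a logical $\bar X$ on the retained star, while $\bm A_r$ also has every vertex of odd degree, so it lies in $\mathcal C_2^\perp\setminus\mathcal C_1^\perp$ and furnishes a logical $\bar Z$ on the same star. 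This holds for every $r\in[\tilde N]$, completing the erasure statement.
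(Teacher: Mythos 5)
Your proposal is correct, and while it reaches the same three conclusions as the paper, its core argument is genuinely different. The paper proceeds via Lemma~\ref{lemma:stabilizergenerator} (commutation and generator counting, identical to your first step), Lemma~\ref{lemma:CSS1} (Appendix~\ref{proof:CSS}: an algebraic parity chase showing that any centralizer element supported off the star of $r$ has $\bm{P}_X=\bm{0}$, via the triangles $\bm{T}_{rij}$, and $\bm{P}_Z\in\mathcal{C}_1^\perp$, via the inner products $\bm{P}_Z\cdot\bm{A}_k$ and the identity $\sum_m\bm{A}_m=\bm{0}$ with $\tilde{N}$ even), and Lemma~\ref{lemma:CSSdistance} (Appendix~\ref{proof:CSSdistance}: exhibit the perfect-matching operator $\bigotimes_{i}Z_{q_{2i-1,2i}}$, then get the lower bound by noting that any Pauli of weight below $\tilde{N}/2$ touches at most $\tilde{N}-2$ vertices, hence avoids some star, so the erasure lemma forces it into $\mathcal{S}$). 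You instead prove a structural classification first: $\mathcal{C}_2$ is exactly the even cuts of $K_{\tilde{N}}$, and $\mathcal{C}_2^\perp$ is the even-degree subgraphs ($=\mathcal{C}_1^\perp$) together with the all-odd-degree subgraphs, with the handshake lemma and the evenness of $\tilde{N}$ entering precisely where the paper's parity chase uses them. From this you read off $d_Z=\tilde{N}/2$ (matching bound), $d_X=\tilde{N}-1$ (odd cuts), and the erasure claim (every nonzero cut and every all-odd subgraph meets every star), plus explicit $\bar{X}$ and $\bar{Z}$ representatives $\bm{A}_r$ on the retained qubits. Note the inverted logical dependency: the paper derives the distance lower bound from the erasure lemma, whereas you derive the erasure claim from the classification; your route buys a full description of the normalizer and both pure-type distances (the paper never computes $d_X$), at the cost of the extra lemma identifying $\mathcal{C}_2$ with the even cuts, while the paper's route is leaner and proves only what the theorem asserts. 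Your reduction of $d$ to $\min(d_Z,d_X)$ via the observation that a mixed Pauli's weight dominates the weight of either symplectic component is sound for the paper's definition of distance, and your degree-parity lemma $\bm{x}\cdot\bm{A}_j=\deg_{\bm{x}}(j)\bmod 2$ is exactly the bridge the paper uses implicitly; it would be worth stating both explicitly in a final write-up.
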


 The stabilizer generator matrix~(\ref{stabilizergenerator}) is analogous to the stabilizer generator matrix of the homological continuous-variable quantum error correcting code~\cite{NJP18}. 
By changing $-1$ to~$1$ in the stabilizer generator matrix of the continuous-variable code,
one obtains the generator matrix~(\ref{stabilizergenerator}) from the generator matrix of the continuous-variable code. 
In continuous-variable codes,
$\pm 1$ in the generator matrix represents the phase-space displacement operators $\text{e}^{\pm\text{i} \hat{X}}$ or $\text{e}^{\pm\text{i} \hat{P}}$, where~$\hat{X}$
and~$\hat{P}$ are quadrature operators~\cite{PhysRevLett.88.097904,RevModPhys.84.621}. 
On the other hand,
in the qubit code,~$1$ in the generator matrix represent Pauli operators~$Z$ or~$X$.

To prove this theorem, we prove the following three lemmas.

\begin{lemma}
\label{lemma:stabilizergenerator}
$H_{\tilde{N}}$~(\ref{stabilizergenerator})  is a stabilizer generator matrix of a CSS code, which encodes one qubit into $\tilde{N}\choose2$ qubits. 
\end{lemma}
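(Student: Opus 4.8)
The plan is to verify in turn the three properties that constitute the claim: that $H_{\tilde N}$ has CSS form, that its rows generate an Abelian group (so it is a bona fide stabilizer code), and that it has exactly ${\tilde N\choose2}-1$ independent rows, forcing $k=1$. The CSS structure is immediate: comparing~(\ref{stabilizergenerator}) with the generic CSS form~(\ref{CSScode}), I read off $H_Z$ as the stack of $\bm{T}_{1jk}$ vectors, all of which sit in the $Z$-half of the symplectic representation~(\ref{symplecticnotation}), and $H_X$ as the stack of $\bm{A}_1+\bm{A}_m$ vectors, all sitting in the $X$-half, with the off-diagonal blocks identically zero.

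Next I would check commutativity via the symplectic criterion~(\ref{commutecondition}). Two $Z$-type rows $[\bm{T}\mid\bm0]$ and $[\bm{T}'\mid\bm0]$ give $\bm{T}\cdot\bm0+\bm0\cdot\bm{T}'=0$ trivially, and any two $X$-type rows commute by the same token; the only substantive case pits a $Z$-type row against an $X$-type row, where the criterion reduces to $\bm{T}_{1jk}\cdot(\bm{A}_1+\bm{A}_m)$. Since each $\bm{T}_{1jk}$ lies in $\mathcal{C}_1^\perp$ as a spanning cycle of~(\ref{eq:cyclespace}) and each $\bm{A}_1+\bm{A}_m$ lies in $\mathcal{C}_2\subseteq\mathcal{C}_1=(\mathcal{C}_1^\perp)^\perp$ by~(\ref{eq:pairs}) and~(\ref{perpofperp}), this product vanishes. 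Hence all generators pairwise commute and the stabilizer $\mathcal{S}$ is Abelian.

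Finally I would count independent generators. Because the top rows live entirely in the $Z$-half and the bottom rows entirely in the $X$-half, a $GF(2)$ combination of all rows vanishes if and only if the $Z$-type combination and the $X$-type combination vanish separately, so overall independence reduces to independence within each block. The $\bm{T}_{1jk}$ are the ${\tilde N-1\choose2}$ linearly independent spanning vectors of $\mathcal{C}_1^\perp$ from~(\ref{eq:cyclespace}), and the $\bm{A}_1+\bm{A}_m$ are the $\tilde N-2$ linearly independent spanning vectors of $\mathcal{C}_2$ from~(\ref{eq:pairs}). Summing and simplifying gives ${\tilde N-1\choose2}+(\tilde N-2)={\tilde N\choose2}-1$ independent generators, whence the encoded dimension is $k={\tilde N\choose2}-\left[{\tilde N\choose2}-1\right]=1$.

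Almost everything here rides on facts already assembled in the graphs-and-linear-algebra subsection, so I expect no deep obstacle. The one point demanding genuine care is the independence argument: I must justify that mixing $Z$-type and $X$-type generators cannot manufacture an accidental $GF(2)$ dependence, which is precisely where the block-diagonal symplectic structure of~(\ref{stabilizergenerator}) is invoked. The complementary small check is confirming that the generator count lands exactly on ${\tilde N\choose2}-1$ rather than off by the gap between $\dim\mathcal{C}_1$ and $\dim\mathcal{C}_2$ (which would give $k=0$) — this arithmetic is what pins the encoded dimension to a single qubit.
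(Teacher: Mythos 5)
Your proof is correct and takes essentially the same route as the paper's: commutativity of the $Z$-type and $X$-type rows via $\bm{T}_{1jk}\cdot(\bm{A}_1+\bm{A}_m)=0$ (i.e., $\mathcal{C}_2\subseteq\mathcal{C}_1=\left(\mathcal{C}_1^\perp\right)^\perp$), the count ${\tilde{N}-1\choose2}+\left(\tilde{N}-2\right)={\tilde{N}\choose2}-1$ of independent generators forcing $k=1$, and the block structure identifying the code as CSS. Your explicit argument that the block-diagonal symplectic form rules out cross-block $GF(2)$ dependences is a detail the paper leaves implicit, but it is a refinement of the same argument rather than a different approach.
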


\begin{proof}  
From Eqs.~(\ref{eq:cutspace}) and~(\ref{eq:cyclespace}),  
\begin{equation}
  \bm{T}_{1jk}\cdot (\bm{A}_1+\bm{A}_l)
  	=0
\end{equation}
for any $j$, $k$, $l$ such that $2\le j<k\le \tilde{N}$ and $2\le l\le \tilde{N}-1$
 Using Eq.~(\ref{commutecondition}), we know that all the stabilizer generators in $H_{\tilde{N}}$~(\ref{stabilizergenerator}) commute with each other,
 thereby generating an Abelian subgroup $\mathcal{S}$ of $\mathcal G_{\tilde{N}\choose2}$. 
 There are ${\tilde{N}\choose2} -1$ independent stabilizer generators, so this stabilizer code encodes one qubit into~$\tilde{N}\choose2$ qubits. 
 The first 
 ${\tilde{N}-1\choose2}$ stabilizer generators contain only $Z$ operators and identities and 
  the other 
$\tilde{N}-2$ stabilizer generators contain only $X$ operators and identities.
  Thus, the stabilizer code is a CSS code. 
  \end{proof}

In $H_{\tilde{N}}$, the vectors representing the Z-type stabilizers and the X-type stabilizers span $\mathcal{C}_1^\perp$~(\ref{eq:cyclespace}) and $\mathcal{C}_2$~(\ref{eq:pairs}) for $n=\tilde{N}$ respectively. Thus, the CSS code in Theorem~\ref{theorem:CSScode} is specified by the linear codes~$\mathcal C_1$~(\ref{eq:cutspace}) and $\mathcal C_2$~(\ref{eq:pairs}) for $n=\tilde{N}$.

Now we show that by assigning each of the ${\tilde{N}\choose2}$ physical qubits to an edge in~$K_{\tilde N}$, this CSS code can correct the erasure errors 
at those qubits, which are not connected to vertex~$\diamonded{r}$, for any $r\in \left[\tilde{N}\right]$. 
Hence, by following the protocol of quantum summoning in~Subsec.~\ref{subsec:protocol}, no matter which~$A_{y_r}$ receives the request, the associated 
reveal agent $A_{z_r}$ can decode the original state~$\ket{\psi}$ from her $\tilde{N}-1$ qubits in Eq.~(\ref{accesseven}) or Eq.~(\ref{accessodd}).
  \begin{lemma}
\label{lemma:CSS1}
For any $r\in \left[\tilde{N}\right]$,  the CSS code in Theorem~\ref{theorem:CSScode} can correct erasure errors at qubits $q_{ij}$ for $i, j\in \left[\tilde{N}\right] \setminus \{r\}$.
  \end{lemma}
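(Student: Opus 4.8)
The plan is to invoke the standard erasure-correction criterion for stabilizer codes: an erasure pattern on a set $E$ of qubits is correctable if and only if no nontrivial logical operator (an element of the normalizer of $\mathcal{S}$ lying outside $\mathcal{S}$) is supported entirely on $E$. Here the erased set is $E=\{q_{ij};\,i,j\in[\tilde N]\setminus\{r\}\}$, the edges of $K_{\tilde N}$ not incident to vertex $r$, while the surviving set $\bar E=\{q_{rk};\,k\neq r\}$ is precisely the star at $r$, whose edge vector is $\bm A_r$. I would reduce the criterion to a convenient sufficient condition: it suffices to exhibit, on the surviving qubits $\bar E$, representatives of both a logical $\bar Z$ and a logical $\bar X$. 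Indeed, every Pauli supported on $E$ commutes with every Pauli supported on the disjoint set $\bar E$; but any nontrivial logical operator of a single-qubit code must anticommute with $\bar X$ or $\bar Z$, so it cannot be supported on $E$. This is exactly the correctability condition, and it bypasses the distance bound entirely.

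The key step is then to produce both logical operators on the star $\bm A_r$. From Lemma~\ref{lemma:stabilizergenerator} and the discussion after it, the $Z$-type and $X$-type stabilizers span $\mathcal C_1^\perp$~(\ref{eq:cyclespace}) and $\mathcal C_2$~(\ref{eq:pairs}), so by the commutation rule~(\ref{commutecondition}) the logical-$X$ vectors form $\mathcal C_1\setminus\mathcal C_2$ and the logical-$Z$ vectors form $\mathcal C_2^\perp\setminus\mathcal C_1^\perp$. I claim the single vector $\bm A_r$ represents both. For $\bm A_r\in\mathcal C_1$~(\ref{eq:cutspace}): although $\mathcal C_1$ is spanned by $\bm A_1,\dots,\bm A_{\tilde N-1}$, the identity $\sum_{j=1}^{\tilde N}\bm A_j=\bm 0$ (each edge is counted at both of its endpoints) places every star $\bm A_r$, including $r=\tilde N$, in $\mathcal C_1$. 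For $\bm A_r\notin\mathcal C_2$: expanding a general element of $\mathcal C_2$ in the basis $\{\bm A_j\}$ shows its number of nonzero coordinates is always even, whereas $\bm A_r$ has one nonzero coordinate for $r\le\tilde N-1$ and $\tilde N-1$ of them for $r=\tilde N$, both odd since $\tilde N$ is even. Hence $\bm A_r$ is a valid logical-$X$ representative living on $\bar E$.

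For the logical $Z$, I would compute $\bm A_r\cdot\bm A_j$ as the number of edges incident to both $r$ and $j$: this is $1$ for $j\neq r$ and $\tilde N-1$ for $j=r$, both odd because $\tilde N$ is even. Thus $\bm A_r\cdot(\bm A_1+\bm A_l)=1+1=0$ for every $l$, giving $\bm A_r\in\mathcal C_2^\perp$; and since the star has odd-degree vertices it is not an even (cycle-space) subgraph, so $\bm A_r\notin\mathcal C_1^\perp$. Therefore $\bm A_r$ is also a valid logical-$Z$ representative on $\bar E$. As a consistency check, $\bm A_r\cdot\bm A_r=\tilde N-1$ is odd, so the $X$-type and $Z$-type operators sharing this support correctly anticommute, confirming they are genuine logical $\bar X$ and $\bar Z$. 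Combined with the disjoint-support argument of the first paragraph, this proves $E$ is correctable.

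The main obstacle to flag is that the ordinary distance argument is useless here: the erased set contains $\binom{\tilde N-1}{2}$ qubits, vastly more than the $d-1=\tilde N/2-1$ erasures a distance bound would guarantee. Correctability comes entirely from the special graph-theoretic structure of the pattern, namely the observation that the star at $r$ is simultaneously a cut (hence an $X$-logical in $\mathcal C_1$) and, because $\tilde N$ is even, an odd-degree subgraph orthogonal to $\mathcal C_2$ (hence a $Z$-logical), with both representatives supported on exactly the qubits held by $A_{z_r}$ in the protocol of Subsec.~\ref{subsec:protocol}. Verifying these membership facts and the crucial parity inputs from $\tilde N$ being even is the real content; the rest is the routine disjoint-support commutation argument.
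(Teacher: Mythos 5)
Your proof is correct, and it takes a genuinely different route from the paper's (Appendix~\ref{proof:CSS}). The paper works on the erased side: for any Pauli $P$ supported on the edges missing vertex $r$, it directly shows $P\in C(\mathcal S)\Rightarrow P\in\mathcal S$, first forcing $\bm{P}_X=\bm 0$ by pairing $P$ against the triangles $\bm{T}_{rij}$ through the surviving vertex, then forcing $\bm{P}_Z\in\mathcal C_1^\perp$ via a parity contradiction built from $\sum_{m=1}^{\tilde N}\bm{A}_m=\bm 0$ and the evenness of $\tilde N$. You work on the surviving side: you exhibit an anticommuting logical pair, an $X$-type and a $Z$-type operator both with support vector $\bm{A}_r$, living entirely on the star at $r$, and then use the cleaning-style observation that disjointly supported Paulis commute, so no element of $C(\mathcal S)\setminus\mathcal S$ can be supported on the erased set. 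The two arguments are logically equivalent---erasure correctability on a set is exactly the absence of nontrivial logicals on it, which is the same criterion the paper invokes---and both hinge on $\tilde N$ being even; in your version the parity enters through $\bm{A}_r\cdot\bm{A}_r=\tilde N-1\equiv 1$ and through the odd coordinate count of $\bm{A}_{\tilde N}$ in the basis $\{\bm{A}_1,\dots,\bm{A}_{\tilde N-1}\}$, the case $r=\tilde N$ being the one place the membership checks need extra care, and you handle it correctly. Your route buys conceptual clarity: it explains \emph{why} the star qubits suffice (they carry a full logical algebra, consistent with the decoding of Subsec.~\ref{subsec:encodedecode}), and you rightly flag that the distance bound is useless for this erasure pattern. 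The paper's direct computation buys reusability: the explicit implication $P\in C(\mathcal S)\Rightarrow P\in\mathcal S$ for operators avoiding a star is cited again verbatim in the distance proof (Appendix~\ref{proof:CSSdistance}), where any operator of weight below $\tilde N/2$ avoids some star; with your version that step would have to be rederived from the pair on $\bm{A}_r$. One presentational point: your closing ``consistency check'' $\bm{A}_r\cdot\bm{A}_r=1$ is not optional---it certifies that the two representatives lie in distinct logical classes, without which a logical in their common class would commute with both and the disjoint-support argument would fail---so it belongs in the main line of the argument, which in substance you have already supplied.
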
  
  The proof of Lemma~\ref{lemma:CSS1} is in Appendix~\ref{proof:CSS}. 
  The proof is a modified version of that for the continuous-variable code~\cite{NJP18} with
  the infinite-dimensional field $\mathbb{R}$ replaced by the finite-dimensional field $\mathbb{Z}_2$.
  One side effect of this modification is that $\tilde{N}$ has to be even. 
Lemma~\ref{lemma:CSS1} is no longer true if $\tilde{N}$ is odd. To see this, we consider an example of a three-qubit code with stabilizer generators $\{ZZZ, IXX\}$. If $r=3$, this code should 
correct any Pauli  error at~$q_{12}$. This is obviously false, because the Pauli error at $q_{12}$, $ZII$, commutes with both stabilizer generators but does not lie in the stabilizer group.

 Now we find the distance of the CSS code, which is an important parameter characterizing the capability of the code to detect and correct errors.  
\begin{lemma}
The distance of the CSS code in Theorem~\ref{theorem:CSScode} is $\tilde{N}/2$.
\label{lemma:CSSdistance}
\end{lemma}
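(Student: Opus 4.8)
The plan is to reduce the distance computation to two minimum-weight problems in the graph picture of Subsec.~III~C, one for pure-$X$ logical operators and one for pure-$Z$ logical operators, and then to show that the $Z$-type minimum dominates. First I would use the commutation condition~(\ref{commutecondition}) to record that a Pauli operator with symplectic representation $\begin{bmatrix}\bm u & \bm v\end{bmatrix}$ commutes with every generator in $H_{\tilde N}$~(\ref{stabilizergenerator}) exactly when $\bm v\in\mathcal C_1$ (orthogonality to the $Z$-block, whose rows span $\mathcal C_1^\perp$~(\ref{eq:cyclespace})) and $\bm u\in\mathcal C_2^\perp$ (orthogonality to the $X$-block, whose rows span $\mathcal C_2$~(\ref{eq:pairs})), while it lies in $\mathcal S$ exactly when $\bm v\in\mathcal C_2$ and $\bm u\in\mathcal C_1^\perp$. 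A short argument then shows the minimum-weight nontrivial logical operator may be taken pure: a nontrivial logical satisfies $\bm v\notin\mathcal C_2$ or $\bm u\notin\mathcal C_1^\perp$, so at least one of $\bm v\in\mathcal C_1\setminus\mathcal C_2$ or $\bm u\in\mathcal C_2^\perp\setminus\mathcal C_1^\perp$ holds, and deleting the other part cannot increase the weight. Hence the distance~(\ref{distance}) equals $\min(d_X,d_Z)$, with $d_X$ the minimum weight over $\mathcal C_1\setminus\mathcal C_2$ and $d_Z$ the minimum weight over $\mathcal C_2^\perp\setminus\mathcal C_1^\perp$.

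Next I would translate each problem through the isomorphism~(\ref{iso:graphvector}). The space $\mathcal C_1$~(\ref{eq:cutspace}), spanned by the stars $\bm A_j$, is the cut space of $K_{\tilde N}$, whose elements are exactly the edge cuts $\delta(S)$ (the edges with one endpoint in $S$), of weight $|S|(\tilde N-|S|)$ in the complete graph. I would identify $\mathcal C_2$~(\ref{eq:pairs}) with the even cuts $\{\delta(S):|S|\text{ even}\}$, using $\bm A_i+\bm A_j=\delta(\{i,j\})$ together with $\sum_j\bm A_j=\bm0$; here the evenness of $\tilde N$ is precisely what places $\bm A_1+\bm A_{\tilde N}$ inside $\mathcal C_2$. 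Consequently $\mathcal C_1\setminus\mathcal C_2$ consists of the odd cuts, and minimizing $k(\tilde N-k)$ over odd $k$ gives the minimum at $k=1$, so $d_X=\tilde N-1$.

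For the $Z$-distance I would characterize $\mathcal C_2^\perp$ via $\bm w\cdot\bm A_j=\deg_{\bm w}(j)\bmod 2$: since every $\bm A_1+\bm A_j$ lies in $\mathcal C_2$, membership in $\mathcal C_2^\perp$ forces all vertex degrees of the subgraph $\bm w$ to have one common parity. The all-even case is exactly the cycle space $\mathcal C_1^\perp$~(\ref{eq:cyclespace}), so $\mathcal C_2^\perp\setminus\mathcal C_1^\perp$ is the set of subgraphs in which every vertex has odd degree. Such a subgraph is an edge cover and so needs at least $\tilde N/2$ edges, and a perfect matching of $K_{\tilde N}$ (which exists precisely because $\tilde N$ is even) attains this bound with every degree equal to one; hence $d_Z=\tilde N/2$. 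Combining the two estimates, $d=\min(\tilde N-1,\tilde N/2)=\tilde N/2$ for every even $\tilde N\ge 2$.

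The step I expect to be the main obstacle is the clean identification of $\mathcal C_2$ with the even cuts and of $\mathcal C_2^\perp\setminus\mathcal C_1^\perp$ with the all-odd-degree subgraphs; both rest on the parity bookkeeping that works only because $\tilde N$ is even, echoing the side condition already noted after Lemma~\ref{lemma:CSS1}. Once those two characterizations are secured, the weight counts (smallest odd cut and smallest odd-degree subgraph) are routine.
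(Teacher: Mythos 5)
Your proof is correct, but it takes a genuinely different route from the paper's. The paper proves the lower bound by a pigeonhole-plus-reuse argument: any Pauli operator of weight less than $\tilde{N}/2$ is supported on fewer than $\tilde{N}/2$ edges, which touch fewer than $\tilde{N}$ vertices, so its support avoids every edge incident to some vertex $r$; the erasure-correction argument of Lemma~\ref{lemma:CSS1} (Appendix~\ref{proof:CSS}) then forces any such operator in $C(\mathcal{S})$ to lie in $\mathcal{S}$. The upper bound is the same witness you found, the pure-$Z$ perfect matching $\bigotimes_{i=1}^{\tilde{N}/2} Z_{q_{2i-1,\,2i}}$, shown directly to lie in $C(\mathcal{S})\setminus\mathcal{S}$. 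You instead run the standard CSS dichotomy $d=\min(d_X,d_Z)$ after a purification step, and compute both minima graph-theoretically: $\mathcal{C}_1\setminus\mathcal{C}_2$ is the set of odd cuts of $K_{\tilde{N}}$, giving $d_X=\tilde{N}-1$ (attained by the star $\bm{A}_1$, i.e., the paper's logical $\bar{X}$), while $\mathcal{C}_2^\perp\setminus\mathcal{C}_1^\perp$ is the set of all-odd-degree subgraphs, giving $d_Z=\tilde{N}/2$ via the edge-cover bound, attained by a perfect matching. Each approach buys something: the paper's proof is shorter because it recycles Lemma~\ref{lemma:CSS1} and never needs to characterize $\mathcal{C}_2$ or $\mathcal{C}_2^\perp$ explicitly; yours is self-contained, yields the finer information that the $X$- and $Z$-distances are asymmetric ($\tilde{N}-1$ versus $\tilde{N}/2$), identifies all minimum-weight logical operators (perfect matchings of $Z$'s), and makes the role of even $\tilde{N}$ structurally transparent (even cuts are well defined and perfect matchings exist exactly then), echoing the counterexample the paper gives after Lemma~\ref{lemma:CSS1} for odd $\tilde{N}$. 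The two parity characterizations you flagged as the main obstacle are sound: $\bm{A}_1+\bm{A}_j=\delta(\{1,j\})$ with $\delta(S)+\delta(T)=\delta(S\,\triangle\,T)$ and $\delta(S)=\delta(S^c)$ gives $\mathcal{C}_2$ as the even cuts, and $\bm{w}\cdot\bm{A}_j=\deg_{\bm{w}}(j)\bmod 2$ gives the common-parity description of $\mathcal{C}_2^\perp$.
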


 Proof of Lemma~\ref{lemma:CSSdistance} is in Appendix~\ref{proof:CSSdistance}. Lemma~\ref{lemma:CSSdistance} implies that the CSS code can correct any $\left(\tilde{N}/2 -1\right)$-qubit erasure errors. 
 Although the distance of this CSS code scales as $O\left(\tilde{N}\right)$, Lemma~\ref{lemma:CSS1} implies that the CSS code can correct particular erasure errors at $O\left(\tilde{N}^2\right)$ qubits.

 This subsection has specified the CSS code~(\ref{eq:NN-12}) by its stabilizer generator matrix~(\ref{stabilizergenerator}). 
 We have shown the erasure errors that the CSS code, can correct and the distance of the CSS code.
 In next subsection, we explain how to encode and decode this CSS code.

 \subsection{Encoding and decoding}
 \label{subsec:encodedecode}
 
   \begin{figure}[t]
\begin{center}
\includegraphics[width=0.95\linewidth]{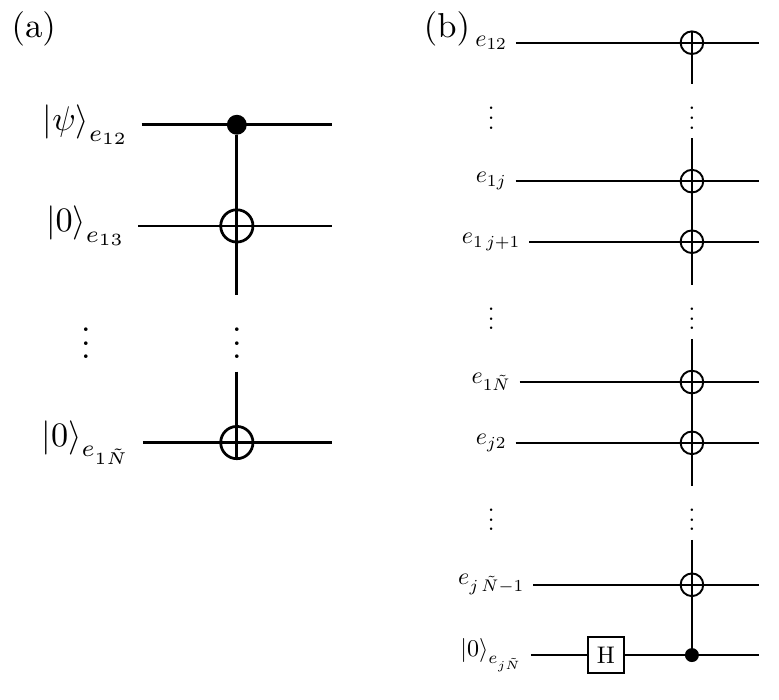}
\end{center} 
\caption{(a) Multiple CNOT gates with $\ket{\psi}_{e_{12}}$ as the control qubit and $\{\ket{0}_{e_{1 i}}\}_{i=2}^{\tilde N}$ as the target qubits; 
(b) A Hadamard gate is applied at $\ket{0}_{e_{j\tilde{N}}}$ followed by multiple CNOT gates with $\ket{0}_{e_{j \tilde N}}$ as the control qubit and the qubits assigned to $\{e_{1l}\}_{l=[\tilde N], l\neq j}\cup \{e_{j k}\}_{k=2}^{\tilde{N}-1}$ as target qubits.}
\label{fig:twoencodingcircuits}
\end{figure}

In last subsection, we have shown that the encoding of our CSS code employs~$O(N^2)$ qubits while the decoding uses only~$O(N)$ qubits.
 In this subsection, we present systematic methods to construct encoding and decoding circuits for our CSS code. 
The encoding method used here follows the standard method of encoding stabilizer codes~[21], discussed in Subsec.~II B. Our decoding method differs from the stabilizer code decoding method because it only corrects erasure errors, which occur in summoning.
 We also calculate the gate complexity of both the encoding and the decoding circuits. 
 It is shown that~$G$ in the encoding is $O(N^2)$ and~$G$ in the decoding is $O(N)$.

  To build the encoding circuit of this CSS code, we introduce logical operations on the encoded state. 
By using the vector representation~(\ref{symplecticnotation}), the logical operations are defined as 
\begin{equation}
  \bar{X}\coloneqq 
 \begin{bmatrix} \bm0 & \bm{A}_1 \end{bmatrix}
\end{equation} and 
\begin{equation}
	\bar{Z}
		\coloneqq \begin{bmatrix} \bm{A}_1 & \bm0 \end{bmatrix}.
\end{equation} 
From~$\bm{A}_1\cdot \bm{A}_1=1$ and Eq.~(\ref{commutecondition}), $\bar{X}$ and $\bar{Z}$ anti-commute with each other.

We choose
\begin{equation}
\ket{\psi_0}=\ket{\bm{0}}\coloneqq \ket{0}^{\tilde{N}\choose2},
\end{equation}
which is an eigenstate of~$\bar{Z}$ with eigenvalue one.
To encode $\ket{\psi}$~(\ref{qubitstate}), using Eqs.~(\ref{encodingzero}) and~(\ref{encodingone}), and the fact that $Z$-type stabilizers act trivially on~$\ket{\psi_0}$,
we obtain the encoded state
 \begin{widetext}
 \begin{equation}
 	 \label{CSScodebasis1}
	\alpha \ket{0}_L +\beta \ket{1}_L
	=\frac{1}{\sqrt{2^{\tilde{N}-2}}} \prod_{j=2}^{\tilde{N}-1} \left(I+\bigotimes_{i=1}^{\tilde{N}\choose2} X^{(\bm{A}_1+\bm{A}_j)_i}\right) 
	\left(\alpha \ket{\bm{0}}  +\beta \ket{\bm{A}_1} \right),
 \end{equation}
 \end{widetext}
 where 
 \begin{equation}
  \ket{\bm{A}_1}=\prod_{i=1}^{\tilde{N}\choose2} X^{(\bm{A}_1)_i}\ket{\bm{0}},
  \end{equation}  and $(\bm{A}_1+\bm{A}_j)_i$ and $(\bm{A}_1)_i$ are the $i$-th entries of vectors $\bm{A}_1+\bm{A}_j$ and~$\bm{A}_1$ respectively.
  
  \begin{figure}[t]
\begin{center}
\includegraphics[width=0.5\linewidth]{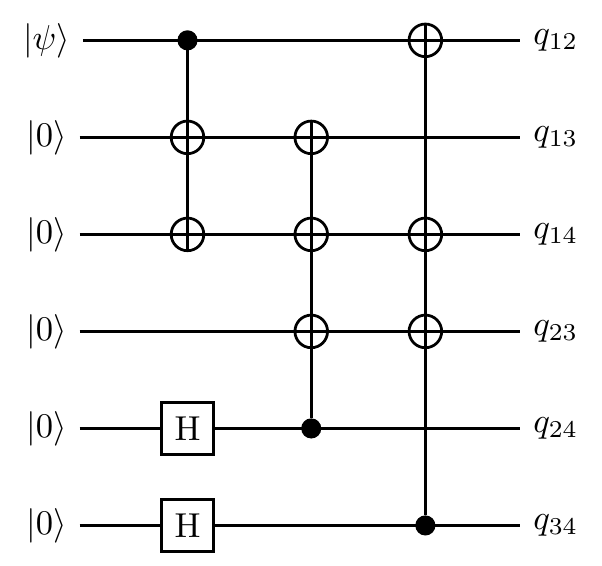}
\end{center} 
\caption{The encoding circuit of the CSS code comprising Hadamard gates and CNOT gates when $\tilde{N}=4$. The inputs of this circuit are $\ket{\psi}\otimes \ket{00000}$ and the outputs of this circuit are the qubits assigned to each edge of the complete graph~$K_4$ shown in Fig.~{\ref{fig:fourvertextournamentprism2}}(b).}
\label{fig:CSSencoding}
\end{figure}

$S$ applies CNOT gates as in Fig.~\ref{fig:twoencodingcircuits}(a) to the product state 
\begin{equation}
   \ket{\psi}\otimes \ket{0}^{{\tilde{N}\choose2}  -1},
\end{equation} 
to obtain
\begin{equation}
\label{encodingoperation}
   \alpha\ket{\bm{0}} 
   +\beta \ket{\bm{A}_1}.
\end{equation}
  Then $S$ implements each operation
  \begin{equation}
  I+\bigotimes_{i=1}^{m} X^{(\bm{A}_1+\bm{A}_j)_i} ,
\end{equation} 
 where $1\le j\le \tilde{N}-2$,
 by using CNOT gates and a Hadamard gate as in Fig.~\ref{fig:twoencodingcircuits}(b). Finally,~$S$ obtains the encoded state~(\ref{CSScodebasis1}). 
  Figure~\ref{fig:CSSencoding} presents an example of the encoding circuit for $\tilde{N}=4$.
  
 The number of CNOT gates in Fig.~\ref{fig:twoencodingcircuits}(a) is~$O(N)$.
In Fig.~\ref{fig:twoencodingcircuits}(b), the number of CNOT gates is~$O(N)$ and the number of Hadamard gate is one. 
 As the circuit in Fig.~\ref{fig:twoencodingcircuits}(a) is only applied once and the circuit in Fig.~\ref{fig:twoencodingcircuits}(b) must be applied~$O(N)$ times.
  the encoding of the CSS code consumes $O\left(N^2\right)$ CNOT gates and $O(N)$ Hadamard gates. 
  Hence, for the encoding of the CSS code,~$G \in O\left(N^2\right)$.
 
   \begin{figure}[t]
\begin{center}
\includegraphics[width=0.8\linewidth]{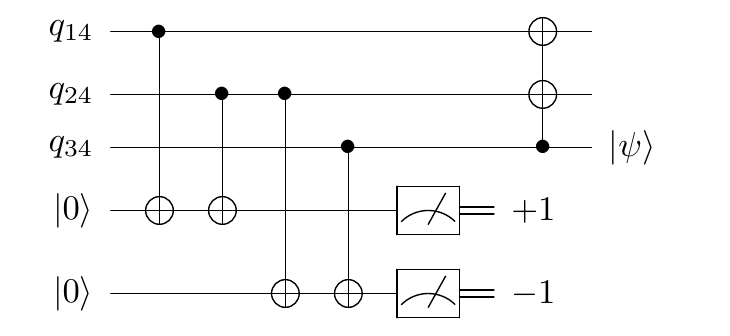}
\end{center} 
\caption{One example of the decoding circuit comprising CNOT gates and measurements of $Z$ operators for $\tilde{N}=4$. The inputs of the circuit are three physical qubits $q_{14}$, $q_{24}$ and~$q_{34}$ and two ancillary qubits $\ket{00}$. The measurement outcomes on the two ancillary qubits are $+1$ and $-1$, based on which two CNOT gates are applied with the third qubit as the control qubit and the first two qubits as the target qubits. The third output qubit is the original qubit~$\ket{\psi}$.}
\label{fig:decoding}
\end{figure}

The decoding scheme is explained in the following.
Suppose the request is sent to~$y_r$. Reveal agent~$A_{z_r}$ cannot decode by measuring the syndromes as she has only $\tilde{N}-1$ qubits.
The encoded state~(\ref{CSScodebasis1}) is an equally weighted superposition of the codewords given in Eqs.~(\ref{superpositionofcodewords0}) and~(\ref{superpositionofcodewords1}) 
(see Appendix~\ref{reducedstate}). 
After tracing out the lost qubits, the reduced state~$\rho_r$~(\ref{reduceddensitymatrix4}) becomes a mixture of the codewords.
To decode the original state~$\ket{\psi}$ from the $\tilde{N}-1$ qubits with reduced density matrix~$\rho_r$, reveal agent~$A_{z_r}$ measures the set of mutually commutative Hermitian operators 
\begin{equation}
\label{eq:measureoperator}
	\left\{Z_{q_{rk}}Z_{q_{r \, k+1}};k\in\left[\tilde{N}-1\right]\backslash\{r\}\right\},
\end{equation}
where $Z_{q_{rk}}$ represents the $Z$ operator on the qubit $q_{rk}$. 
 After applying the projective measurements, 
the reduced state is projected onto one codeword, becoming a pure state (see Appendix~\ref{reducedstate}).

 According to the measurement outcomes, 
by applying  $\tilde{N}-2$ CNOT gates with one control qubit and distinct target qubits,~$A_{z_r}$ 
obtains the original state~$\ket{\psi}$ at the control qubit.
Fig.~\ref{fig:decoding} presents an example of the decoding circuit when the request is received at $\diamonded{4}$.
In decoding, $G\in O(N)$ and the number of single-qubit measurements is also $O(N)$.

\subsection{Comparison with the CWS code}
\label{subsec:comparison}
Now we compare the quantum resources required by our CSS code with Hayden and May's CWS code. 
To investigate the complexity of the encoding of the CWS code, we need to know the complexity of preparing graph state~$\ket{G_{\text{CWS}}}$. 
From Eq.~(\ref{graphstate}), we know that the number of controlled-$Z$ gates and Hadamard gates in preparing a graph state equals to the number of edges and vertices in the graph, respectively.
The numbers of edges and vertices in~$G_{\text{CWS}}$ are
\begin{align}
|E(G_{\text{CWS}})|= &\frac{N(N-1)^2}{2},\\
 |V(G_{\text{CWS}})|= & N(N-1).
\end{align}
Thus, preparing $\ket{G_{\text{CWS}}}$ requires $O\left(N^3\right)$ controlled-$Z$ gates and $O\left(N^2\right)$ Hadamard gates.
Applying the codeword operators (\ref{codewordoperators}) requires additional $O\left(N^2\right)$ controlled-$Z$ gates. 

In conclusion, the encoding of the CWS code consumes $O\left(N^3\right)$ controlled-$Z$ gates and $O\left(N^2\right)$ Hadamard gates. 
Compared with the CWS code, our CSS code reduces $G$ for encoding from~$O\left(N^3\right)$ to~$O\left(N^2\right)$.

Our protocol employs a CSS code to summon quantum information in any valid configuration.
The CSS code can correct the erasure errors that occur in the quantum summoning task.
The encoding and the decoding methods for this CSS code have been presented. 
Finally, we have compared the complexity of the encoding of the CSS code with the encoding of the CWS code and found that our CSS code is more efficient.
  
\section{Discussion}
\label{sec:discussion}
We have presented a protocol to summon quantum information efficiently in any valid configuration of causal diamonds. Central to our protocol is a CSS code that encodes one logical qubit into $O(N^2)$ physical qubits, where each physical qubit is assigned to an edge of a complete graph whose vertices correspond to causal diamonds. 
This code is a qubit version of the homological continuous-variable quantum error correcting code~\cite{NJP18}.
The CSS code is designed using the fact that the power set of edges of a complete graph can be cast as a vector space. The stabilizer generators of the CSS code correspond to triangle graphs and sums of star graphs. 

The properties of these graphs are used to show that the logical qubit can be decoded from the subset of physical qubits that are assigned to edges adjacent to any vertex. In order to employ this code for summoning, the physical qubits are sent to the request points in such a way that the past of every reveal point contains enough physical qubits to decode the original qubit. Our protocol design, similar to one used previously~\cite{NJP18}, ensures that whenever a request agent receives the request the associated reveal agent receives all physical qubits required to decode the original qubit. 

We also present procedures to design the encoding and decoding circuits for the CSS code. We show that our protocol is less resource-intensive than the protocol based on the CWS code~\cite{JPA49} which uses circuits that are $O(N^2)$ wide and $O(N^3)$ deep. The circuits for the CSS code have width that is also $O(N^2)$ but half that of the CWS code and require only $O(N^2)$ gates.

\section{Conclusion}
\label{sec:conclusion}

Our protocol for summoning is designed to work for any valid configuration of causal diamonds, where the underlying CSS code depends only on the number of causal diamonds. It is likely that codes can be designed that reduce resource usage by exploiting the structure of causal connections between the causal diamonds, examples being when a single causal curve connects multiple causal diamonds~\cite{NJP18} or when the graph representing causal connections is acyclic~\cite{PhysRevA.93.062327}. 

While any given configuration of causal diamonds may be realized in man-made quantum networks, a useful avenue of research would be to classify the configurations that can occur naturally in flat or curved spacetimes. Our codes as well as other codes for quantum summoning assume that entangled states may be tranferred without decoherence in spacetime. Quantum summoning in curved spacetime or Rindler coordinates might require the usage of codes that protect against decoherence caused due to gravity or acceleration~\cite{Alsing2003,PhysRevLett.110.113602,PhysRevD.96.065018}.

\begin{acknowledgements}
We thank Dominic Berry, Dong-Xiao Quan, Masoud Habibi Davijani, Yun-Jiang Wang and Wei-Wei Zhang for helpful discussions and acknowledge funding from Alberta Innovates, NSERC, China's
1000 Talent Plan, and the Institute for Quantum Information and Matter, an NSF Physics Frontiers Center (NSF Grant PHY-1125565) with support of the Gordon and Betty Moore Foundation (GBMF-2644).
\end{acknowledgements}

\appendix
 
\section{Proof of Lemma~\ref{lemma:CSS1}}
\label{proof:CSS}

\begin{proof}

Denote $\mathcal E_r$ as the set of Pauli operators at qubits $q_{ij}$ for $i, j\in \left[\tilde{N}\right]\setminus \{r\}$. 
For any Pauli operator $P\in\mathcal E_r$, its vector representation~(\ref{symplecticnotation}) is denoted
\begin{equation}
P=\begin{bmatrix}  \bm{P}_Z & \bm{P}_X \end{bmatrix}.
\end{equation} 
As~$P$ acts nontrivially only at qubits $q_{ij}$ for $i, j\in \left[\tilde{N}\right] \setminus \{r\}$, 
\begin{equation}
\forall k\in\left[\tilde{N}\right]\setminus \{r\}, \; \bm{P}_Z\cdot \bm{e}_{rk}=\bm{P}_X\cdot \bm{e}_{rk}=0.
\label{errorcondition}
\end{equation}
 
 To show that the stabilizer code in Theorem~\ref{theorem:CSScode} can correct any error in $\mathcal E_r$ for every $r$, 
it is sufficient to prove that~\cite{gottesman1997stabilizer}  
\begin{equation}
\forall \, P\in \mathcal E_r, \; P\in C(\mathcal S)\Rightarrow  P\in \mathcal S,
\end{equation}
where~$\mathcal S$ is the stabilizer group generated by the stabilizer generators in~$H_{\tilde{N}}$ (\ref{stabilizergenerator}) and $C(\mathcal S)$ is the centralizer of~$\mathcal S$ in~$\mathcal G_{\tilde{N}\choose2}$, i.e.~the group of the Pauli operators commuting with all the elements of~$\mathcal S$.
Using Eq.~(\ref{commutecondition}), we know that
$P\in C(\mathcal S)$ if and only if  
\begin{equation}
\forall  \,\bm{v}\in\mathcal{C}_1^\perp, \; \bm{P}_X\cdot \bm{v}=0,
\label{commutecondition1}
\end{equation}
and 
\begin{equation}
\forall \, \bm{u}\in \mathcal{C}_2, \, \bm{P}_Z\cdot \bm{u}=0.
\label{commutecondition2}
\end{equation}

From $\bm{T}_{rij}\in \mathcal{C}_1^\perp$ and Eq.~(\ref{commutecondition1}), 
\begin{equation}
\label{PXorthogonaltoT}
\bm{P}_X\cdot \bm{T}_{rij}=0.
\end{equation}
It implies that 
\begin{equation}
	\bm{P}_X\cdot(\bm{e}_{ri}+\bm{e}_{rj}+\bm{e}_{ij})=0.
\end{equation}   
Using Eq.~(\ref{errorcondition}), we know that 
\begin{equation}
 \bm{P}_X\cdot \bm{e}_{ij}=0.
\label{PXdotEdge}
\end{equation} 
Equation~(\ref{PXdotEdge}), together with Eq.~(\ref{errorcondition}), implies that $\bm{P}_X=\bm0$.  
   
   Next we prove that Eq.~(\ref{commutecondition2}) implies that $\bm{P}_Z \in \mathcal{C}_1^\perp$.  
 Suppose 
\begin{equation}
\label{assumption}
 \bm{P}_Z\cdot \bm{A}_1=1.
\end{equation} 
 Then Eq.~(\ref{commutecondition2}) implies that for $2\le l\le \tilde{N}-1$,
\begin{equation}
 \bm{P}_Z\cdot \bm{A}_l=1.
\end{equation}
As $\tilde{N}$ is even, 
\begin{equation}
\sum_{m=1}^{\tilde{N}-1} \bm{P}_Z\cdot\bm{A}_m=1.
\end{equation}
 As 
\begin{equation}
\label{sumzero}
 \sum_{m=1}^{\tilde{N}} \bm{P}_Z\cdot\bm{A}_m= \bm{P}_Z\cdot \sum_{m=1}^{\tilde{N}} \bm{A}_m=0,
\end{equation} 
 we have
\begin{equation}
 \bm{P}_Z\cdot \bm{A}_{\tilde{N}}=1.
\end{equation} 
 Hence,
\begin{equation}
\label{allone}
 \forall\, k\in \left[\tilde{N}\right], \; \bm{P}_Z\cdot \bm{A}_k=1.
\end{equation} 
Equation~(\ref{allone}) contradicts  
 Eq.~(\ref{errorcondition}) because Eq.~(\ref{errorcondition}) implies that 
\begin{equation}
 \bm{P}_Z\cdot \bm{A}_r=0.
\end{equation}
 
 Thus, (\ref{assumption}) is false and 
\begin{equation}
 \bm{P}_Z\cdot \bm{A}_1=0.
\end{equation} 
 Then Eq.~(\ref{commutecondition2}) 
 indicates that for $2\le l\le \tilde{N}-1$,
\begin{equation}
 \bm{P}_Z\cdot \bm{A}_l=0.
\end{equation}
Hence 
\begin{equation}
\sum_{m=1}^{\tilde{N}-1} \bm{P}_Z\cdot\bm{A}_m=0.
\end{equation}
 From Eq.~(\ref{sumzero}),
  we know that
\begin{equation}
	\bm{P}_Z\cdot \bm{A}_{\tilde{N}}=0.
\end{equation} 
 Thus, 
\begin{equation}
\forall \, k\in \left[\tilde{N}\right], \;  \bm{P}_Z\cdot \bm{A}_k=0,
\end{equation}
 which indicates that $\bm{P}_Z \in\mathcal{C}_1^\perp$. Since $\bm{P}_X=\bm0$, 
  we know that $P$ is a $Z$-type stabilizer in~$\mathcal S$.

  \end{proof}

\section{Proof of Lemma~\ref{lemma:CSSdistance}}
\label{proof:CSSdistance}

\begin{proof}
 The distance (\ref{distance}) of the stabilizer code in Theorem~\ref{theorem:CSScode} equals to the minimum weight of the Pauli operators in $C(\mathcal{S})\setminus \mathcal{S}$~\cite{gottesman1997stabilizer}.
 To prove the minimum weight of the Pauli operators in $C(\mathcal{S})\setminus \mathcal{S}$ is $\tilde{N}/2$,
 we first show that there exists a Pauli operator
 \begin{equation}
 \label{errorinCSnotinS}
P\coloneqq \bigotimes_{i=1}^{\tilde{N}/2} Z_{q_{2i-1,\, 2i}}
\end{equation}
  with weight~$\tilde{N}/2$ such that $P\in C(\mathcal{S})\setminus \mathcal{S}$. 
 Then we show that no Pauli operator with weight less than~$\tilde{N}/2$ lies in $C(\mathcal{S})\setminus \mathcal{S}$.
 
 From Eq~(\ref{errorinCSnotinS}), we know that
 \begin{equation}
 \label{errorinCSnotinSvector}
 \bm{P}_Z=\sum_{i=1}^{\tilde{N}/2} \bm{e}_{2i-1,\, 2i},\; \bm{P}_X=\bm{0}.
 \end{equation}
From Eq. (\ref{errorinCSnotinSvector}), we find that 
\begin{equation}
\forall\, l \in \left[\tilde{N}\right],\; \bm{P}_Z\cdot \bm{A}_l=1.
\end{equation}
Hence, 
\begin{equation}
\forall \, \bm{v}\in \mathcal{C}_2, \,\bm{P}_Z\cdot \bm{v}=0.
\end{equation}
From the fact that $\bm{P}_X=\bm0$, we know that~$P$ (\ref{errorinCSnotinS}) commutes with all the stabilizers in~$\mathcal S$, i.e., $P\in C(\mathcal S)$. 
As $\bm{P}_Z$ cannot be represented by an Euclidean cycle, 
\begin{equation}
\bm{P}_Z \notin \mathcal{C}_1^\perp.
\end{equation}
 Thus, $E\notin \mathcal S$, and hence 
\begin{equation}
P\in C(\mathcal{S})\setminus \mathcal{S}.
\end{equation}

For any Pauli operator~$P'$ with weight less than~$\tilde{N}/2$, there exists an $r\in \left[\tilde{N}\right]$ such that 
\begin{equation}
\forall k \in \left[\tilde{N}\right]\setminus \{r\},\; \bm{P}'_Z\cdot \bm{e}_{rk}=\bm{P}'_X\cdot \bm{e}_{rk}=0.
\end{equation}
From the discussion in Appendix~\ref{proof:CSS}, we know that
\begin{equation}
P'\in C(\mathcal S)\Rightarrow  P'\in \mathcal S.
\end{equation}
It implies that
\begin{equation}
P'\notin C(\mathcal{S})\setminus \mathcal{S}.
\end{equation}
Thus, no Pauli operator with weight less than~$\tilde{N}/2$ lies in $C(\mathcal{S})\setminus \mathcal{S}$.

\end{proof} 
 
 \section{Reduced density matrix}
\label{reducedstate}

In this appendix, we calculate the reduced density matrix~$\rho_r$ and show the state after the measurements of the operators in Eq.~(\ref{eq:measureoperator}).

From Eqs.~(\ref{encodingzero}) and~(\ref{encodingone}), 
  \begin{align}
      \label{superpositionofcodewords0}
  \ket{0}_L =&\frac{1}{\sqrt{2^{\tilde{N}-2}}} \sum_{\bm{x}\in \mathcal{C}_2} \ket{\bm{x}}, \\
     \label{superpositionofcodewords1}
   \ket{1}_L =&\frac{1}{\sqrt{2^{\tilde{N}-2}}} \sum_{\bm{x}\in \mathcal{C}_2} \ket{\bm{A}_1+\bm{x}},
  \end{align}
 form a basis of the CSS code in Theorem~\ref{theorem:CSScode}.
Hence, the density matrix of the  ${\tilde{N}\choose2}$ physical qubits encoding~$\ket{\psi}$~(\ref{qubitstate}) is
\begin{align}
	\rho
		=&\frac{1}{2^{\tilde{N}-2}}\sum_{\bm{x}\in \mathcal{C}_2} \left( \alpha \ket{\bm{x}}+\beta \ket{\bm{A}_1+\bm{x}}\right)
			\nonumber\\
		&\times\sum_{\bm{y}\in\mathcal{C}_2}  \left(\alpha^* \bra{\bm{y}}+\beta^*\bra{\bm{A}_1+\bm{y}}|\right).
\end{align}

To express the reduced density matrix, we give the following notations. 
The subset of edges connected to~$\diamonded{r}$ in $K_{\tilde N}$  is denoted by $E_r$ and the complement of $E_r$ in $E$, 
i.e.~the subset of edges not connected to~$\diamonded{r}$, is denoted by $E_r^c$. 
In the same way as~$2^{E_K}$ forms a linear space~$$\mathscr{E}\cong \mathbb{Z}_2^{\tilde{N}\choose2},$$ the power set  $2^{E_r}$ forms a linear subspace
\begin{equation}
\mathscr{E}_r\cong \mathbb{Z}_2^{\tilde{N}-1}
\end{equation} and 
the power set $2^{E_r^c}$ forms the orthogonal complement of~$\mathscr{E}_r$ in~$\mathscr{E}$, denoted by 
\begin{equation}
\mathscr{E}_r^{\perp}\cong \mathbb{Z}_2^{\tilde{N}-1\choose2}.
\end{equation}
 For a vector $\bm{v} \in\mathscr E$, 
 we use $\bm{v}_r$ to denote the projection of~$\bm{v}$ onto $\mathscr{E}_r$, and
 $\bm{v}_r^{\perp}$ to denote the projection of~$\bm{v}$ onto $\mathscr{E}_r^{\perp}$.

 Now we calculate the reduced density matrix $\rho_r$ of the $\tilde{N}-1$ qubits  $\left \{q_{rk}; k\in \left[\tilde{N}\right]\setminus \{r\} \right \}$.
  As  
\begin{equation}
\forall r\in \left[\tilde{N}\right],\; \sum_{\bm{x}\in \mathcal{C}_2} \ket{\bm{A}_r+\bm{x}}= \sum_{\bm{x}\in \mathcal{C}_2} \ket{ \bm{A}_1+\bm{x}},
\end{equation} 
\begin{widetext}
 \begin{equation}
\label{reduceddensitymatrix1}
 \rho_r =\frac{1}{2^{\tilde{N}-2}}\operatorname{tr}_{E^c_r} \left[ \sum_{\bm{x}\in \mathcal{C}_2} \Big(\alpha \ket{\bm{x}}+\beta \ket{\bm{A}_r+\bm{x}}\Big) 
  \sum_{\bm{y}\in \mathcal{C}_2}  \Big(\alpha^* \langle \bm{y}|+\beta^* \langle \bm{A}_r+\bm{y}|\Big) \right],
 \end{equation}
 where $\operatorname{tr}_{E_r^c}$ denotes the partial trace over the qubits $\left\{ q_{ij}; \, i, j \in \left[ \tilde{N} \right] \setminus \{r\} \right\}$.
 From the definition of partial trace~\cite{nielsen2000quantum}, 
 \begin{equation}
\label{reduceddensitymatrix2}
   \rho_r =\frac{1}{2^{\tilde{N}-2}}\sum_{\bm{v}\in  \mathscr{E}_r^{\perp}} \left\langle \bm{v} \left| \sum_{\bm{x}\in \mathcal{C}_2} \Big(\alpha \ket{\bm{x}}+\beta \ket{\bm{A}_r+\bm{x}}\Big)  
  \sum_{\bm{y}\in \mathcal{C}_2} \Big(\alpha^* \langle \bm{y}|+\beta^* \langle \bm{A}_r+\bm{y}|\Big) \right |\bm{v}\right \rangle. 
  \end{equation}
For each $\bm{v}\in \mathscr{E}_r^{\perp}$, there is at most one $\bm{x}\in \mathcal{C}_2$ such that
\begin{equation}
 \bm{x}_r^{\perp}=\bm{v}.
\end{equation}
Hence, we get
  \begin{equation}
\label{reduceddensitymatrix3}
  \rho_r  =\frac{1}{2^{\tilde{N}-2}}\sum_{\bm{x}\in \mathcal{C}_2} \left \langle \bm{x}_r^{\perp} \right|  \left(\alpha \ket{\bm{x}}+\beta\ket{\bm{A}_r +\bm{x}}\right) \left(\alpha^* \langle \bm{x} |+\beta^* \langle \bm{A}_r+\bm{x} |\right) \left|\bm{x}_r^{\perp}\right\rangle. 
  \end{equation}
  As 
  \begin{equation}
 \langle \bm{x}_r^{\perp}\ket{\bm{x}}=\ket{\bm{x}_r} \text{ and } \langle \bm{x}_r^{\perp}\ket{\bm{A}_r +\bm{x}}=\ket{\bm{1} +\bm{x}_r},
 \end{equation}
 where $\bm{1}$ is an ($\tilde{N}-1$)-dimensional vector with all the entries equal to $1$,
  \begin{equation}
\label{reduceddensitymatrix4}
 \rho_r  =\frac{1}{2^{\tilde{N}-2}}\sum_{\bm{x}\in \mathcal{C}_2} \left(\alpha \ket{\bm{x}_r}+\beta \ket{\bm{1} +\bm{x}_r}\right) \left(\alpha^* \langle \bm{x}_r|+\beta^* \langle \bm{1}+\bm{x}_r|\right).
\end{equation}
 \end{widetext}

Consider the set of mutually commutative Hermitian operators in~(\ref{eq:measureoperator})
\begin{equation*}
\left\{Z_{q_{rk}}Z_{q_{r \, k+1}};k\in\left[\tilde{N}-1\right]\backslash\{r\}\right\}.
\end{equation*}
$\forall \bm{x}\in \mathcal{C}_2$, $\ket{\bm{x}_r}$ and $\ket{\bm{1} +\bm{x}_r}$ are the eigenstates of each Hermitian operator in~(\ref{eq:measureoperator}) with same eigenvalue,
so any linear combination $$\alpha \ket{\bm{x}_r}+\beta \ket{\bm{1} +\bm{x}_r}$$ is a common eigenstate of the Hermitian operators~(\ref{eq:measureoperator}), 
with the eigenvalues forming a vector consisting of $\pm 1$.
For 
\begin{equation}
\label{conditionxz}
\forall \bm{x}, \bm{z} \in \mathcal{C}_2 \text{ and } \bm{x}\neq \bm{z},
\end{equation} 
the two eigenstates
\begin{equation}\label{code1}
\alpha \ket{\bm{x}_r}+\beta \ket{\bm{1} +\bm{x}_r}
\end{equation} and
\begin{equation}\label{code2}
\alpha \ket{\bm{z}_r}+\beta \ket{\bm{1} +\bm{z}_r}
\end{equation} 
have different eigenvalue vectors.
This is because if~(\ref{code1}) and~(\ref{code2}) have the same eigenvalues,
then either $\bm{x}_r=\bm{z}_r$ or $\bm{x}_r=\bm{z}_r+\bm{1}$, both of which contradict condition~(\ref{conditionxz}).
Thus, $\rho_r$ in Eq.~(\ref{reduceddensitymatrix4}) is an equally weighted mixture of the common eigenstates of the Hermitian operators~(\ref{eq:measureoperator}) with different eigenvalue vectors.

After the projective measurements on all the Hermitian operators~(\ref{eq:measureoperator}),
the reduced state is projected onto 
\begin{equation}
\label{stateaftermeasurement}
\alpha \ket{\bm{y}_r}+\beta \ket{\bm{1} +\bm{y}_r},
\end{equation}
where $\bm{y}\in \mathcal{C}_2$ and $\bm{y}_r$ is the projection of $\bm{y}$ onto~$\mathscr E_r$. 
The corresponding measurement outcomes are $\left\{(-1)^{\bm{y}_r^{(i)}+ \bm{y}_r^{(i+1)}}; i\in \left[\tilde{N}-2\right] \right\}$,
where $\bm{y}_r^{(i)}$ is the $i$-th component in~$\bm{y}_r$.
(\ref{stateaftermeasurement}) is the state after the measurements of the Hermitian operators in~(\ref{eq:measureoperator}).

\bibliography{ref}

\begin{thebibliography}{33}
\expandafter\ifx\csname natexlab\endcsname\relax\def\natexlab#1{#1}\fi
\expandafter\ifx\csname bibnamefont\endcsname\relax
  \def\bibnamefont#1{#1}\fi
\expandafter\ifx\csname bibfnamefont\endcsname\relax
  \def\bibfnamefont#1{#1}\fi
\expandafter\ifx\csname citenamefont\endcsname\relax
  \def\citenamefont#1{#1}\fi
\expandafter\ifx\csname url\endcsname\relax
  \def\url#1{\texttt{#1}}\fi
\expandafter\ifx\csname urlprefix\endcsname\relax\def\urlprefix{URL }\fi
\providecommand{\bibinfo}[2]{#2}
\providecommand{\eprint}[2][]{\url{#2}}

\bibitem[{\citenamefont{Kent}(2013)}]{Kent2013}
\bibinfo{author}{\bibfnamefont{A.}~\bibnamefont{Kent}},
  \bibinfo{journal}{Quantum Inf. Process.} \textbf{\bibinfo{volume}{12}},
  \bibinfo{pages}{1023} (\bibinfo{year}{2013}),
  \urlprefix\url{http://dx.doi.org/10.1007/s11128-012-0431-6}.

\bibitem[{\citenamefont{Hayden and May}(2016)}]{JPA49}
\bibinfo{author}{\bibfnamefont{P.}~\bibnamefont{Hayden}} \bibnamefont{and}
  \bibinfo{author}{\bibfnamefont{A.}~\bibnamefont{May}}, \bibinfo{journal}{J.
  Phys. A} \textbf{\bibinfo{volume}{49}}, \bibinfo{pages}{175304}
  (\bibinfo{year}{2016}),
  \urlprefix\url{http://stacks.iop.org/1751-8121/49/i=17/a=175304}.

\bibitem[{\citenamefont{Hayden et~al.}(2016)\citenamefont{Hayden, Nezami,
  Salton, and Sanders}}]{NJP18}
\bibinfo{author}{\bibfnamefont{P.}~\bibnamefont{Hayden}},
  \bibinfo{author}{\bibfnamefont{S.}~\bibnamefont{Nezami}},
  \bibinfo{author}{\bibfnamefont{G.}~\bibnamefont{Salton}}, \bibnamefont{and}
  \bibinfo{author}{\bibfnamefont{B.~C.} \bibnamefont{Sanders}},
  \bibinfo{journal}{New J. Phys.} \textbf{\bibinfo{volume}{18}},
  \bibinfo{pages}{083043} (\bibinfo{year}{2016}),
  \urlprefix\url{http://stacks.iop.org/1367-2630/18/i=8/a=083043}.

\bibitem[{\citenamefont{Adlam and Kent}(2016)}]{PhysRevA.93.062327}
\bibinfo{author}{\bibfnamefont{E.}~\bibnamefont{Adlam}} \bibnamefont{and}
  \bibinfo{author}{\bibfnamefont{A.}~\bibnamefont{Kent}},
  \bibinfo{journal}{Phys. Rev. A} \textbf{\bibinfo{volume}{93}},
  \bibinfo{pages}{062327} (\bibinfo{year}{2016}),
  \urlprefix\url{http://link.aps.org/doi/10.1103/PhysRevA.93.062327}.

\bibitem[{\citenamefont{Park}(1970)}]{Park1970}
\bibinfo{author}{\bibfnamefont{J.~L.} \bibnamefont{Park}},
  \bibinfo{journal}{Found. Phys.} \textbf{\bibinfo{volume}{1}},
  \bibinfo{pages}{23} (\bibinfo{year}{1970}),
  \urlprefix\url{https://doi.org/10.1007/BF00708652}.

\bibitem[{\citenamefont{Wootters and Zurek}(1982)}]{wootters1982single}
\bibinfo{author}{\bibfnamefont{W.~K.} \bibnamefont{Wootters}} \bibnamefont{and}
  \bibinfo{author}{\bibfnamefont{W.~H.} \bibnamefont{Zurek}},
  \bibinfo{journal}{Nature} \textbf{\bibinfo{volume}{299}},
  \bibinfo{pages}{802} (\bibinfo{year}{1982}),
  \urlprefix\url{http://dx.doi.org/10.1038/299802a0}.

\bibitem[{\citenamefont{Dieks}(1982)}]{dieks1982}
\bibinfo{author}{\bibfnamefont{D.}~\bibnamefont{Dieks}},
  \bibinfo{journal}{Phys. Lett. A} \textbf{\bibinfo{volume}{92}},
  \bibinfo{pages}{271} (\bibinfo{year}{1982}),
  \urlprefix\url{http://www.sciencedirect.com/science/article/pii/0375960182900846}.

\bibitem[{\citenamefont{Ortigoso}(2017)}]{Juan}
\bibinfo{author}{\bibfnamefont{J.}~\bibnamefont{Ortigoso}},
  \bibinfo{journal}{arXiv:1707.06910}  (\bibinfo{year}{2017}).

\bibitem[{\citenamefont{Cleve et~al.}(1999)\citenamefont{Cleve, Gottesman, and
  Lo}}]{PhysRevLett.83.648}
\bibinfo{author}{\bibfnamefont{R.}~\bibnamefont{Cleve}},
  \bibinfo{author}{\bibfnamefont{D.}~\bibnamefont{Gottesman}},
  \bibnamefont{and} \bibinfo{author}{\bibfnamefont{H.-K.} \bibnamefont{Lo}},
  \bibinfo{journal}{Phys. Rev. Lett.} \textbf{\bibinfo{volume}{83}},
  \bibinfo{pages}{648} (\bibinfo{year}{1999}),
  \urlprefix\url{https://link.aps.org/doi/10.1103/PhysRevLett.83.648}.

\bibitem[{\citenamefont{Gottesman}(2000)}]{PhysRevA.61.042311}
\bibinfo{author}{\bibfnamefont{D.}~\bibnamefont{Gottesman}},
  \bibinfo{journal}{Phys. Rev. A} \textbf{\bibinfo{volume}{61}},
  \bibinfo{pages}{042311} (\bibinfo{year}{2000}),
  \urlprefix\url{http://link.aps.org/doi/10.1103/PhysRevA.61.042311}.

\bibitem[{\citenamefont{Markham and Sanders}(2008)}]{PhysRevA.78.042309}
\bibinfo{author}{\bibfnamefont{D.}~\bibnamefont{Markham}} \bibnamefont{and}
  \bibinfo{author}{\bibfnamefont{B.~C.} \bibnamefont{Sanders}},
  \bibinfo{journal}{Phys. Rev. A} \textbf{\bibinfo{volume}{78}},
  \bibinfo{pages}{042309} (\bibinfo{year}{2008}),
  \urlprefix\url{https://link.aps.org/doi/10.1103/PhysRevA.78.042309}.

\bibitem[{\citenamefont{Bennett et~al.}(1993)\citenamefont{Bennett, Brassard,
  Cr\'epeau, Jozsa, Peres, and Wootters}}]{PhysRevLett.70.1895}
\bibinfo{author}{\bibfnamefont{C.~H.} \bibnamefont{Bennett}},
  \bibinfo{author}{\bibfnamefont{G.}~\bibnamefont{Brassard}},
  \bibinfo{author}{\bibfnamefont{C.}~\bibnamefont{Cr\'epeau}},
  \bibinfo{author}{\bibfnamefont{R.}~\bibnamefont{Jozsa}},
  \bibinfo{author}{\bibfnamefont{A.}~\bibnamefont{Peres}}, \bibnamefont{and}
  \bibinfo{author}{\bibfnamefont{W.~K.} \bibnamefont{Wootters}},
  \bibinfo{journal}{Phys. Rev. Lett.} \textbf{\bibinfo{volume}{70}},
  \bibinfo{pages}{1895} (\bibinfo{year}{1993}),
  \urlprefix\url{https://link.aps.org/doi/10.1103/PhysRevLett.70.1895}.

\bibitem[{\citenamefont{Werner}(2001)}]{0305-4470-34-35-332}
\bibinfo{author}{\bibfnamefont{R.~F.} \bibnamefont{Werner}},
  \bibinfo{journal}{J. Phys. A} \textbf{\bibinfo{volume}{34}},
  \bibinfo{pages}{7081} (\bibinfo{year}{2001}),
  \urlprefix\url{http://stacks.iop.org/0305-4470/34/i=35/a=332}.

\bibitem[{\citenamefont{Cross et~al.}(2009)\citenamefont{Cross, Smith, Smolin,
  and Zeng}}]{4729763}
\bibinfo{author}{\bibfnamefont{A.}~\bibnamefont{Cross}},
  \bibinfo{author}{\bibfnamefont{G.}~\bibnamefont{Smith}},
  \bibinfo{author}{\bibfnamefont{J.~A.} \bibnamefont{Smolin}},
  \bibnamefont{and} \bibinfo{author}{\bibfnamefont{B.}~\bibnamefont{Zeng}},
  \bibinfo{journal}{IEEE Trans. Inf. Theory} \textbf{\bibinfo{volume}{55}},
  \bibinfo{pages}{433} (\bibinfo{year}{2009}).

\bibitem[{\citenamefont{Calderbank and Shor}(1996)}]{PhysRevA.54.1098}
\bibinfo{author}{\bibfnamefont{A.~R.} \bibnamefont{Calderbank}}
  \bibnamefont{and} \bibinfo{author}{\bibfnamefont{P.~W.} \bibnamefont{Shor}},
  \bibinfo{journal}{Phys. Rev. A} \textbf{\bibinfo{volume}{54}},
  \bibinfo{pages}{1098} (\bibinfo{year}{1996}),
  \urlprefix\url{https://link.aps.org/doi/10.1103/PhysRevA.54.1098}.

\bibitem[{\citenamefont{Steane}(1996{\natexlab{a}})}]{Steane2551}
\bibinfo{author}{\bibfnamefont{A.}~\bibnamefont{Steane}},
  \bibinfo{journal}{Proc. Royal Soc. A} \textbf{\bibinfo{volume}{452}},
  \bibinfo{pages}{2551} (\bibinfo{year}{1996}{\natexlab{a}}),
  \urlprefix\url{http://rspa.royalsocietypublishing.org/content/452/1954/2551}.

\bibitem[{\citenamefont{Diestel}(2016)}]{diestel2005graph}
\bibinfo{author}{\bibfnamefont{R.}~\bibnamefont{Diestel}},
  \emph{\bibinfo{title}{Graph Theory}}, vol. \bibinfo{volume}{173} of
  \emph{\bibinfo{series}{Graduate Texts in Mathematics}}
  (\bibinfo{publisher}{Springer-Verlag}, \bibinfo{address}{Heidelberg},
  \bibinfo{year}{2016}).

\bibitem[{\citenamefont{Bombin and
  Martin-Delgado}(2007)}]{doi:10.1063/1.2731356}
\bibinfo{author}{\bibfnamefont{H.}~\bibnamefont{Bombin}} \bibnamefont{and}
  \bibinfo{author}{\bibfnamefont{M.~A.} \bibnamefont{Martin-Delgado}},
  \bibinfo{journal}{J. Math. Phys.} \textbf{\bibinfo{volume}{48}},
  \bibinfo{pages}{052105} (\bibinfo{year}{2007}),
  \urlprefix\url{http://dx.doi.org/10.1063/1.2731356}.

\bibitem[{\citenamefont{Shor}(1995)}]{PhysRevA.52.R2493}
\bibinfo{author}{\bibfnamefont{P.~W.} \bibnamefont{Shor}},
  \bibinfo{journal}{Phys. Rev. A} \textbf{\bibinfo{volume}{52}},
  \bibinfo{pages}{R2493} (\bibinfo{year}{1995}),
  \urlprefix\url{https://link.aps.org/doi/10.1103/PhysRevA.52.R2493}.

\bibitem[{\citenamefont{Steane}(1996{\natexlab{b}})}]{PhysRevLett.77.793}
\bibinfo{author}{\bibfnamefont{A.~M.} \bibnamefont{Steane}},
  \bibinfo{journal}{Phys. Rev. Lett.} \textbf{\bibinfo{volume}{77}},
  \bibinfo{pages}{793} (\bibinfo{year}{1996}{\natexlab{b}}),
  \urlprefix\url{https://link.aps.org/doi/10.1103/PhysRevLett.77.793}.

\bibitem[{\citenamefont{Gottesman}(1997)}]{gottesman1997stabilizer}
\bibinfo{author}{\bibfnamefont{D.}~\bibnamefont{Gottesman}}, Ph.D. thesis,
  \bibinfo{school}{California Institute of Technology} (\bibinfo{year}{1997}).

\bibitem[{\citenamefont{Preskill}(2015)}]{preskill1998lecture}
\bibinfo{author}{\bibfnamefont{J.}~\bibnamefont{Preskill}},
  \emph{\bibinfo{title}{Lecture Notes for Physics 229: Quantum Information and
  Computation}} (\bibinfo{publisher}{CreateSpace Independent Publishing
  Platform}, \bibinfo{address}{North Charleston}, \bibinfo{year}{2015}).

\bibitem[{\citenamefont{Grassl et~al.}(1997)\citenamefont{Grassl, Beth, and
  Pellizzari}}]{PhysRevA.56.33}
\bibinfo{author}{\bibfnamefont{M.}~\bibnamefont{Grassl}},
  \bibinfo{author}{\bibfnamefont{T.}~\bibnamefont{Beth}}, \bibnamefont{and}
  \bibinfo{author}{\bibfnamefont{T.}~\bibnamefont{Pellizzari}},
  \bibinfo{journal}{Phys. Rev. A} \textbf{\bibinfo{volume}{56}},
  \bibinfo{pages}{33} (\bibinfo{year}{1997}),
  \urlprefix\url{https://link.aps.org/doi/10.1103/PhysRevA.56.33}.

\bibitem[{\citenamefont{Roman}(2005)}]{roman2005advanced}
\bibinfo{author}{\bibfnamefont{S.}~\bibnamefont{Roman}},
  \emph{\bibinfo{title}{Advanced Linear Algebra}}, vol. \bibinfo{volume}{135}
  of \emph{\bibinfo{series}{Graduate Texts in Mathematics}}
  (\bibinfo{publisher}{Springer Science+Business Media}, \bibinfo{address}{New
  York}, \bibinfo{year}{2005}).

\bibitem[{\citenamefont{Lidl and Niederreiter}(1997)}]{lidl1997finite}
\bibinfo{author}{\bibfnamefont{R.}~\bibnamefont{Lidl}} \bibnamefont{and}
  \bibinfo{author}{\bibfnamefont{H.}~\bibnamefont{Niederreiter}},
  \emph{\bibinfo{title}{Finite Fields}} (\bibinfo{publisher}{Cambridge
  university press}, \bibinfo{address}{Cambridge}, \bibinfo{year}{1997}).

\bibitem[{\citenamefont{Raussendorf et~al.}(2003)\citenamefont{Raussendorf,
  Browne, and Briegel}}]{PhysRevA.68.022312}
\bibinfo{author}{\bibfnamefont{R.}~\bibnamefont{Raussendorf}},
  \bibinfo{author}{\bibfnamefont{D.~E.} \bibnamefont{Browne}},
  \bibnamefont{and} \bibinfo{author}{\bibfnamefont{H.~J.}
  \bibnamefont{Briegel}}, \bibinfo{journal}{Phys. Rev. A}
  \textbf{\bibinfo{volume}{68}}, \bibinfo{pages}{022312}
  (\bibinfo{year}{2003}),
  \urlprefix\url{https://link.aps.org/doi/10.1103/PhysRevA.68.022312}.

\bibitem[{\citenamefont{Taylor and Wheeler}(1992)}]{spacetime}
\bibinfo{author}{\bibfnamefont{E.~F.} \bibnamefont{Taylor}} \bibnamefont{and}
  \bibinfo{author}{\bibfnamefont{J.~A.} \bibnamefont{Wheeler}},
  \emph{\bibinfo{title}{Spacetime Physics: Introduction to Special Relativity}}
  (\bibinfo{publisher}{W. H. Freeman}, \bibinfo{address}{New York},
  \bibinfo{year}{1992}).

\bibitem[{\citenamefont{Bartlett et~al.}(2002)\citenamefont{Bartlett, Sanders,
  Braunstein, and Nemoto}}]{PhysRevLett.88.097904}
\bibinfo{author}{\bibfnamefont{S.~D.} \bibnamefont{Bartlett}},
  \bibinfo{author}{\bibfnamefont{B.~C.} \bibnamefont{Sanders}},
  \bibinfo{author}{\bibfnamefont{S.~L.} \bibnamefont{Braunstein}},
  \bibnamefont{and} \bibinfo{author}{\bibfnamefont{K.}~\bibnamefont{Nemoto}},
  \bibinfo{journal}{Phys. Rev. Lett.} \textbf{\bibinfo{volume}{88}},
  \bibinfo{pages}{097904} (\bibinfo{year}{2002}),
  \urlprefix\url{https://link.aps.org/doi/10.1103/PhysRevLett.88.097904}.

\bibitem[{\citenamefont{Weedbrook et~al.}(2012)\citenamefont{Weedbrook,
  Pirandola, Garc\'{\i}a-Patr\'on, Cerf, Ralph, Shapiro, and
  Lloyd}}]{RevModPhys.84.621}
\bibinfo{author}{\bibfnamefont{C.}~\bibnamefont{Weedbrook}},
  \bibinfo{author}{\bibfnamefont{S.}~\bibnamefont{Pirandola}},
  \bibinfo{author}{\bibfnamefont{R.}~\bibnamefont{Garc\'{\i}a-Patr\'on}},
  \bibinfo{author}{\bibfnamefont{N.~J.} \bibnamefont{Cerf}},
  \bibinfo{author}{\bibfnamefont{T.~C.} \bibnamefont{Ralph}},
  \bibinfo{author}{\bibfnamefont{J.~H.} \bibnamefont{Shapiro}},
  \bibnamefont{and} \bibinfo{author}{\bibfnamefont{S.}~\bibnamefont{Lloyd}},
  \bibinfo{journal}{Rev. Mod. Phys.} \textbf{\bibinfo{volume}{84}},
  \bibinfo{pages}{621} (\bibinfo{year}{2012}),
  \urlprefix\url{http://link.aps.org/doi/10.1103/RevModPhys.84.621}.

\bibitem[{\citenamefont{Alsing and Milburn}(2003)}]{Alsing2003}
\bibinfo{author}{\bibfnamefont{P.~M.} \bibnamefont{Alsing}} \bibnamefont{and}
  \bibinfo{author}{\bibfnamefont{G.~J.} \bibnamefont{Milburn}},
  \bibinfo{journal}{Phys. Rev. Lett.} \textbf{\bibinfo{volume}{91}},
  \bibinfo{pages}{180404} (\bibinfo{year}{2003}),
  \urlprefix\url{https://link.aps.org/doi/10.1103/PhysRevLett.91.180404}.

\bibitem[{\citenamefont{Friis et~al.}(2013)\citenamefont{Friis, Lee, Truong,
  Sab\'{\i}n, Solano, Johansson, and Fuentes}}]{PhysRevLett.110.113602}
\bibinfo{author}{\bibfnamefont{N.}~\bibnamefont{Friis}},
  \bibinfo{author}{\bibfnamefont{A.~R.} \bibnamefont{Lee}},
  \bibinfo{author}{\bibfnamefont{K.}~\bibnamefont{Truong}},
  \bibinfo{author}{\bibfnamefont{C.}~\bibnamefont{Sab\'{\i}n}},
  \bibinfo{author}{\bibfnamefont{E.}~\bibnamefont{Solano}},
  \bibinfo{author}{\bibfnamefont{G.}~\bibnamefont{Johansson}},
  \bibnamefont{and} \bibinfo{author}{\bibfnamefont{I.}~\bibnamefont{Fuentes}},
  \bibinfo{journal}{Phys. Rev. Lett.} \textbf{\bibinfo{volume}{110}},
  \bibinfo{pages}{113602} (\bibinfo{year}{2013}),
  \urlprefix\url{https://link.aps.org/doi/10.1103/PhysRevLett.110.113602}.

\bibitem[{\citenamefont{Ahmadi et~al.}(2017)\citenamefont{Ahmadi, Wu, and
  Sanders}}]{PhysRevD.96.065018}
\bibinfo{author}{\bibfnamefont{M.}~\bibnamefont{Ahmadi}},
  \bibinfo{author}{\bibfnamefont{Y.-D.} \bibnamefont{Wu}}, \bibnamefont{and}
  \bibinfo{author}{\bibfnamefont{B.~C.} \bibnamefont{Sanders}},
  \bibinfo{journal}{Phys. Rev. D} \textbf{\bibinfo{volume}{96}},
  \bibinfo{pages}{065018} (\bibinfo{year}{2017}),
  \urlprefix\url{https://link.aps.org/doi/10.1103/PhysRevD.96.065018}.

\bibitem[{\citenamefont{Nielsen and Chuang}(2010)}]{nielsen2000quantum}
\bibinfo{author}{\bibfnamefont{M.~A.} \bibnamefont{Nielsen}} \bibnamefont{and}
  \bibinfo{author}{\bibfnamefont{I.~L.} \bibnamefont{Chuang}},
  \emph{\bibinfo{title}{Quantum Computation and Quantum Information: 10th
  Anniversary Edition}} (\bibinfo{publisher}{Cambridge University Press},
  \bibinfo{address}{Cambridge}, \bibinfo{year}{2010}).

\end{thebibliography}
\end{document}